  \newlength{\defbaselineskip}
  \def\setstretch#1{\renewcommand{\baselinestretch}{#1}}
\newcommand{\cmark}{\ding{51}}%
\newcommand{\xmark}{\ding{55}}%
\newcommand{\A}{{A}}
\newcommand{\B}{{B}}
\newcommand{\C}{{C}}
\newcommand{\Abar}{\overline{A}}
\newcommand{\Kbar}{\overline{K}}
\newcommand{\dt}{\Delta}
\newcommand{\diag}{\Lambda}
\newcommand{\p}{p}
\newcommand{\D}{{D}}
\newcommand{\name}{\textsc{SaShiMi}}
\theoremstyle{plain}
\newtheorem{theorem}{Theorem}[section]
\newtheorem{proposition}[theorem]{Proposition}
\theoremstyle{definition}
\newtheorem{definition}[theorem]{Definition}
\theoremstyle{remark}
  \title{It's Raw! Audio Generation with State-Space Models}
  \author[]{Karan Goel}
  \author[]{Albert Gu}
  \author[]{Chris Donahue}
  \author[]{Christopher R{\'e}}
  \affil[]{Department of Computer Science, Stanford University}
  \affil[]{{\texttt{ \{kgoel,albertgu,cdonahue,chrismre\}@cs.stanford.edu}}}
\begin{document}

\iftoggle{arxiv}{
  \date{}
  \maketitle
}{}

\begin{abstract}
Developing architectures suitable for modeling raw audio is a challenging problem due to the high sampling rates of audio waveforms.
Standard sequence modeling approaches like RNNs and CNNs have previously been tailored to fit the demands of audio,
but the resultant architectures make undesirable computational tradeoffs and struggle to model waveforms effectively. 
We propose \name{}, a new multi-scale architecture for waveform modeling built around the recently introduced S4 model for long sequence modeling.
We identify that S4 can be unstable during autoregressive generation, and provide a simple improvement to its parameterization by drawing connections to Hurwitz matrices.
\name{} yields state-of-the-art performance for unconditional waveform generation in the autoregressive setting. 
Additionally, \name{} improves non-autoregressive generation performance when used as the backbone architecture for a diffusion model. 
Compared to prior architectures in the autoregressive generation setting,
\name{} generates piano and speech waveforms which humans find more musical and coherent respectively, e.g.\ $2\times$ better mean opinion scores than WaveNet on an unconditional speech generation task. On a music generation task, 
\name{} 
outperforms WaveNet on density estimation and speed at both training and inference 
even when using $3\times$ fewer parameters. 
\iftoggle{arxiv}{Code can be found at \url{https://github.com/HazyResearch/state-spaces} and samples at \url{https://hazyresearch.stanford.edu/sashimi-examples}.}{}

\end{abstract}

\begin{figure*}[!t]
\begin{minipage}{1.0\linewidth}%
    \centering
    \includegraphics[width=\linewidth]{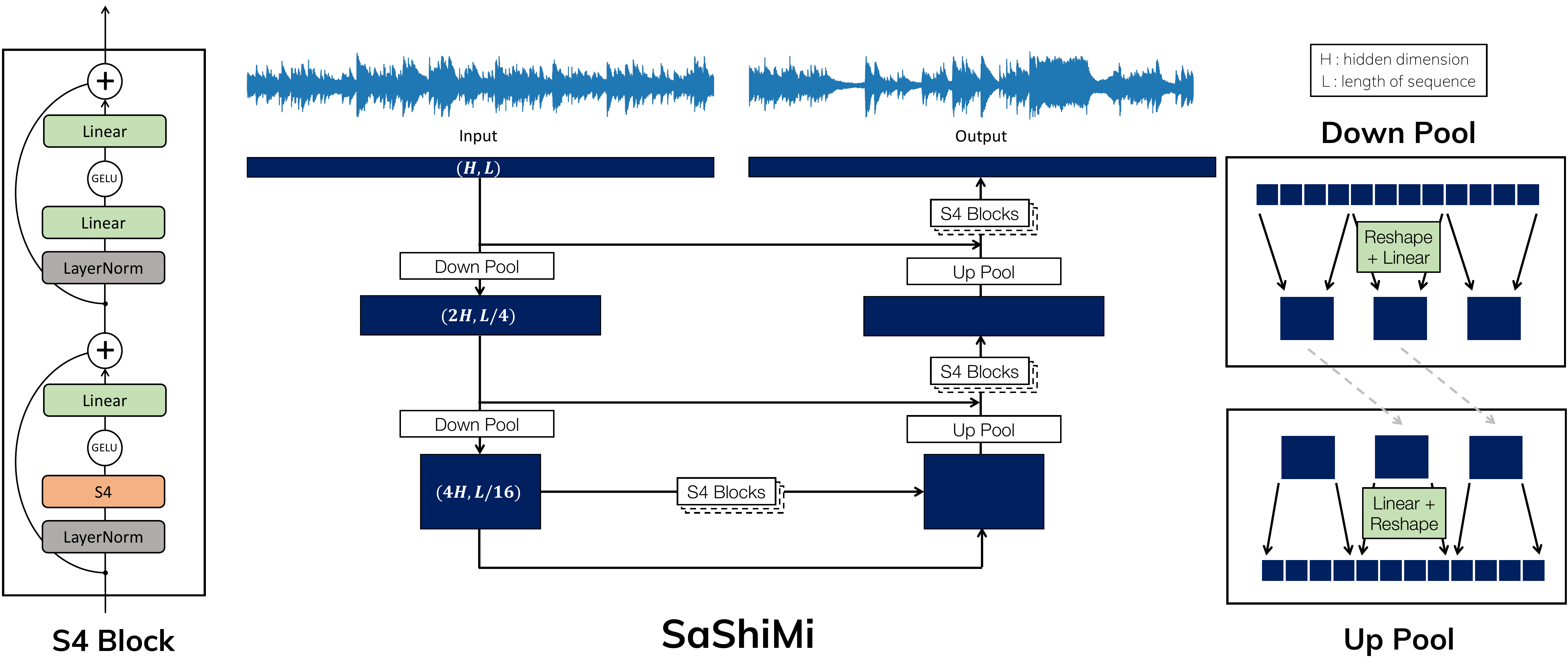}
\end{minipage}
\caption{
  \name{} consists of a simple repeated block combined with a multiscale architecture.
  (\emph{Left}) The basic S4 block is composed of an S4 layer combined with standard pointwise linear functions, non-linearities, and residual connections.
  (\emph{Center}) Dark blue rectangles illustrate the shape of inputs.
  The input is progressively transformed to shorter and wider sequences through pooling layers, then transformed back with stacks of S4 blocks.
  Longer range residual connections are included to help propagate signal through the network.
  (\emph{Right}) Pooling layers are position-wise linear transformations with shifts to ensure causality.
}
\label{fig:architecture}
\end{figure*}

\section{Introduction}
\label{sec:intro}

Generative modeling of raw audio \emph{waveforms} is a challenging frontier for machine learning due to their high-dimensionality---waveforms contain tens of thousands of timesteps per second and exhibit long-range behavior at multiple timescales. 
A key problem is developing architectures for modeling waveforms with the following properties:
\iftoggle{arxiv}{\begin{enumerate}}{\begin{enumerate}[leftmargin=*,itemsep=0pt]}
    \item {\bf Globally coherent} generation, which requires modeling unbounded contexts with long-range dependencies. 
    \item {\bf Computational efficiency} through parallel training, and fast autoregressive and non-autoregressive inference.
    \item {\bf Sample efficiency} through a model with inductive biases well suited to high-rate waveform data.
\end{enumerate}

Among the many training methods for waveform generation, autoregressive (AR) modeling is a fundamentally important approach. 
AR models learn the distribution of future variables conditioned on past observations, and are central to 
recent advances in machine learning for 
language and image generation~\citep{brown2020language,ramesh2021zero,bommasani2021opportunities}. 
With AR models, computing the exact likelihood is tractable, which makes them simple to train, and lends them to applications such as lossless compression~\citep{kleijn2018wavenet} and posterior sampling~\citep{jayaram2021parallel}. 
When generating, they can condition on arbitrary amounts of past context to sample sequences of unbounded length---potentially even longer than contexts observed during training. 
Moreover, architectural developments in AR waveform modeling can have a cascading effect on audio generation more broadly. 
For example, WaveNet---the earliest such architecture~\citep{oord2016wavenet}---remains a central component of 
state-of-the-art approaches for text-to-speech~(TTS)~\citep{li2019neural},
unconditional generation~\citep{lakhotia2021generative}, 
and non-autoregressive (non-AR) generation~\citep{kong2021diffwave}. 

Despite notable progress in AR modeling of (relatively) short sequences found in domains such as natural language (e.g.~$1$K tokens), 
it is still an open challenge to develop architectures that are effective for the much longer sequence lengths of audio waveforms (e.g.~$1$M samples). 
Past attempts have tailored standard sequence modeling approaches like CNNs~\citep{oord2016wavenet}, RNNs~\citep{mehri2017samplernn}, and Transformers~\citep{child2019generating} to fit the demands of AR waveform modeling, 
but these approaches have limitations. 
For example, RNNs lack computational efficiency because they cannot be parallelized during training, while CNNs cannot achieve global coherence because they are fundamentally constrained by the size of their receptive field.

We introduce {\bf \name{}}, a new architecture for modeling waveforms that yields state-of-the-art performance on unconditional audio generation benchmarks in both the AR and non-AR settings. 
\name{} is designed around recently developed deep state space models (SSM), specifically S4~\citep{gu2022efficiently}.
SSMs have a number of key features that make them ideal for modeling raw audio data. Concretely, S4: %

\iftoggle{arxiv}{\begin{enumerate}}{\begin{enumerate}[leftmargin=*,itemsep=0pt]}
    \item Incorporates a principled approach to modeling long range dependencies with strong results on long sequence modeling, including raw audio classification.
    \item Can be computed either as a CNN for efficient parallel training, or an RNN for fast autoregressive generation.%
    \item Is implicitly a continuous-time model, making it well-suited to signals like waveforms.
\end{enumerate}

To realize these benefits of SSMs inside \name{}, we make $3$ technical contributions.
First, we observe that while stable to train, S4's recurrent representation cannot be used for autoregressive generation due to numerical instability.
We identify the source of the instability using classical state space theory,
which states that SSMs are stable when the state matrix is Hurwitz,
which is not enforced by the S4 parameterization.
We provide a simple improvement to the S4 parameterization that theoretically ensures stability.

Second, \name{} incorporates pooling layers between blocks of residual S4 layers to capture hierarchical information across multiple resolutions. %
This is a common technique in neural network architectures such as standard CNNs and multi-scale RNNs, and provides empirical improvements in both performance and computational efficiency over isotropic stacked S4 layers.

Third, while S4 is a causal (unidirectional) model suitable for AR modeling, we provide a simple bidirectional relaxation to flexibly incorporate it in non-AR architectures.
This enables it to better take advantage of the available global context in non-AR settings.

For AR modeling in audio domains with unbounded sequence lengths (e.g.~music), 
\name{} can train on much longer contexts than existing methods including WaveNet (sequences of length $128$K vs $4$K),
while simultaneously having better test likelihood, faster training and inference, and fewer parameters.
\name{} outperforms existing AR methods in modeling the data ($>0.15$ bits better negative log-likelihoods), with substantial improvements ($+0.4$ points) in the musicality of long generated samples ($16$s) as measured by mean opinion scores.
In unconditional speech generation, 
\name{} achieves superior global coherence compared to previous AR models 
on the difficult SC09 dataset both quantitatively ($80\%$ higher inception score) and qualitatively ($2\times$ higher audio quality and digit intelligibility opinion scores by human evaluators).

Finally, we validate that \name{} is a versatile backbone for non-AR architectures.
Replacing the WaveNet backbone with \name{} in the state-of-the-art diffusion model DiffWave improves its quality, sample efficiency, and robustness to hyperparameters with no additional tuning.

\paragraph{Our Contributions.}

The central contribution of this paper is showing that deep neural networks using SSMs are a strong alternative to conventional architectures for modeling audio waveforms, with favorable tradeoffs in training speed, generation speed, sample efficiency, and audio quality.
\iftoggle{arxiv}{\begin{itemize}}{\begin{itemize}[leftmargin=*,itemsep=0pt]}
 \item We technically improve the parameterization of S4, ensuring its stability when switching into recurrent mode at generation time.
 \item We introduce \name{}, an SSM-based architecture with high efficiency and performance for unconditional AR modeling of music and speech waveforms.
 \item We show that \name{} is easily incorporated into other deep generative models to improve their performance.
\end{itemize}

\section{Related Work}
\label{sec:related}

This work focuses primarily on the task of generating raw audio waveforms without conditioning information.
Most past work on waveform generation involves conditioning on localized intermediate representations like
spectrograms~\citep{shen2018natural,kumar2019melgan,prenger2019waveglow},
linguistic features~\citep{oord2016wavenet,kalchbrenner2018efficient,binkowski2020high},
or discrete audio codes~\citep{oord2017neural,dieleman2018challenge,dhariwal2020jukebox,lakhotia2021generative}.
Such intermediaries provide copious information about the underlying content of a waveform, enabling generative models to produce globally-coherent waveforms while only modeling local structure.

In contrast, modeling waveforms in an unconditional fashion requires learning both local and global structure with a single model, and is thus more challenging.
Past work in this setting can be categorized into
AR approaches~\citep{oord2016wavenet,mehri2017samplernn,child2019generating}, where audio samples are generated one at a time given previous audio samples,
and non-AR approaches~\citep{donahue2019adversarial,kong2021diffwave}, where entire waveforms are generated in a single pass.
While non-AR approaches tend to generate waveforms more efficiently,
AR approaches have two key advantages.
First, unlike non-AR approaches, they can generate waveforms of unbounded length.
Second, they can tractably compute exact likelihoods,
allowing them to be used for compression~\citep{kleijn2018wavenet} and posterior sampling~\citep{jayaram2021parallel}.

In addition to these two advantages, new architectures for AR modeling of audio have the potential to bring about a cascade of improvements in audio generation more broadly.
For example, while the WaveNet architecture was originally developed for AR modeling (in both conditional and unconditional settings),
it has since become a fundamental piece of infrastructure in numerous audio generation systems.
For instance,
WaveNet is
commonly used to \emph{vocode}
intermediaries such as spectrograms~\citep{shen2018natural} or discrete audio codes~\citep{oord2017neural} into waveforms, often in the context of text-to-speech (TTS) systems.
Additionally, it serves as the backbone for several families of non-AR generative models of audio in both the conditional and unconditional settings:
\iftoggle{arxiv}{\begin{enumerate}[label=(\roman*),leftmargin=*]%
}{\begin{enumerate}[label=(\roman*),leftmargin=*,nolistsep]%
}
  \item Distillation: Parallel WaveNet~\citep{oord2018parallel} and ClariNet~\citep{ping2019clarinet} distill parallelizable flow models from a teacher WaveNet model.
  \item Likelihood-based flow models: WaveFlow~\citep{ping2020waveflow}, WaveGlow~\citep{prenger2019waveglow}, and FloWaveNet~\citep{kim2019flowavenet} all use WaveNet as a core component of reversible flow architectures.
  \item Autoencoders: WaveNet Autoencoder~\citep{engel2017neural} and WaveVAE~\citep{peng2020non}, which use WaveNets in their encoders.
  \item Generative adversarial networks (GAN): Parallel WaveGAN~\citep{yamamoto2020parallel} and GAN-TTS~\citep{binkowski2020high}, which use WaveNets in their discriminators.
  \item Diffusion probabilistic models: WaveGrad~\cite{chen2021wavegrad} and DiffWave~\cite{kong2021diffwave} learn a reversible noise diffusion process on top of dilated convolutional architectures.
\end{enumerate}
In particular, we point out that DiffWave represents the state-of-the-art for unconditional waveform generation, and incorporates
WaveNet as a black box. %

Despite its prevalence,
WaveNet is unable to model long-term structure beyond the length of its receptive field (up to $3$s), and in practice, may even fail to leverage available information beyond a few tens of milliseconds~\citep{shen2018natural}.
Hence,
we develop an alternative to WaveNet which can leverage unbounded context.
We focus primarily on evaluating our proposed architecture \name{} in the fundamental AR setting,
and additionally demonstrate that, like WaveNet,
\name{} can also transfer to non-AR settings.

\section{Background}
We provide relevant background on autoregressive waveform modeling in~\cref{sec:autoregressive}, state-space models in~\cref{sec:ssm} and the recent S4 model in~\cref{sec:s4}, before introducing \name{} in~\cref{sec:method}.
\label{sec:background}

\subsection{Autoregressive Modeling of Audio}
\label{sec:autoregressive}
Given a distribution over waveforms \( x = (x_0, \dots, x_{T-1}) \), autoregressive generative models model the joint distribution as the factorized product of conditional probabilities
\begin{align*}
  p(x) = \prod_{t=0}^{T-1} p(x_t | x_0, \dots, x_{t-1})
  .
\end{align*}

Autoregressive models have two basic modes:

\textbf{Training}: Given a sequence of samples \( x_0, \dots, x_{T-1} \),
maximize the likelihood
\[ p(x_0, \dots, x_{T-1}) = \sum_{i=0}^{T-1} p(x_i | x_0, \dots, x_{i-1}) = \sum_{i=0}^{T-1} \ell(y_i, x_{i+1}) \]
where \( \ell \) is the cross-entropy loss function.

\textbf{Inference (Generation)}: Given \( x_0, \dots, x_{t-1} \) as context, sample from the distribution represented by \( y_{t-1} = p(x_t \mid x_0, \dots, x_{t-1})\) to produce the next sample \( x_t \).

We remark that by the training mode, autoregressive models are equivalent to \emph{causal sequence-to-sequence maps} \( x_0, \dots, x_{T-1} \mapsto y_0, \dots, y_{T-1} \),
where \( x_k \) are input samples to model and \( y_k \) represents the model's guess of \( p_(x_{k+1} \mid x_0, \dots, x_k) \). %
For example, when modeling a sequence of categorical inputs over \( k \) classes, typically \( x_k \in \mathbbm{R}^d \) are embeddings of the classes and \( y_k \in \mathbbm{R}^k \) represents a categorical distribution over the classes.

The most popular models for autoregressive audio modeling are based on CNNs and RNNs,
which have different tradeoffs during training and inference.
A CNN layer computes a convolution with a parameterized kernel
\begin{equation}
  \label{eq:cnn}
  K = (k_0, \dots, k_{w-1}) \qquad  y = K \ast x
\end{equation}
where \( w \) is the width of the kernel.
The \emph{receptive field} or \emph{context size} of a CNN is the sum of the widths of its kernels over all its layers.
In other words, modeling a context of size \( T \) requires learning a number of parameters proportional to \( T \).
This is problematic in domains such as audio which require very large contexts.

A variant of CNNs particularly popular for modeling audio is the \emph{dilated convolution} (DCNN) popularized by WaveNet~\citep{oord2016wavenet},
where each kernel \( K \) is non-zero only at its endpoints.
By choosing kernel widths carefully, such as in increasing powers of \( 2 \), a DCNN can model larger contexts than vanilla CNNs.

RNNs such as SampleRNN~\citep{mehri2017samplernn} maintain a hidden state \( h_t \) that is sequentially computed from the previous state and current input,
and models the output as a function of the hidden state
\begin{equation}
  \label{eq:rnn}
  h_t = f(h_{t-1}, x_t) \qquad
  y_t = g(h_t)
\end{equation}
The function \( f \) is also known as an RNN cell,
such as the popular LSTM~\citep{lstm}.

CNNs and RNNs have efficiency tradeoffs as autoregressive models.
CNNs are \emph{parallelizable}: given an input sequence \( x_0, \dots, x_{T-1} \),
they can compute all \( y_k \) at once, making them efficient during training. %
However, they become awkward at inference time when only the output at a single timestep \( y_t \) is needed. Autoregressive stepping requires specialized caching implementations that have higher complexity requirements than RNNs.

On the other hand, RNNs are \emph{stateful}: The entire context \( x_0, \dots, x_{t} \) is summarized into the hidden state \( h_{t} \). %
This makes them efficient at inference, requiring only constant time and space to generate the next hidden state and output.
However, this inherent sequentiality leads to slow training and optimization difficulties (the vanishing gradient problem~\citep{hochreiter2001gradient,pascanu2013difficulty}).

\subsection{State Space Models}
\label{sec:ssm}

A recent class of deep neural networks was developed that have properties of both CNNs and RNNs.
The state space model (SSM) is defined in continuous time by the equations
\begin{equation}
  \label{eq:ssm}
  \begin{aligned}
    h'(t) &= \A h(t) + \B x(t) \\
    y(t)  &= \C h(t) + \D x(t)
  \end{aligned}
  .
\end{equation}

To operate on discrete-time sequences sampled with a step size of \( \dt \), SSMs can be computed with the recurrence
\begin{align}
  \label{eq:ssm-recurrence}
  h_{k} &= \overline{A} h_{k-1} + \overline{B} x_k
  \qquad y_k = \overline{C} h_k + \overline{D} x_k
  \\
  \label{eq:ssm-discretization}
  \overline{A} &= (I - \dt/2 \cdot A)^{-1}(I + \dt/2 \cdot A)
\end{align}
where \(\overline{A}\) is the \emph{discretized state matrix} and \( \overline{B}, \dots \) have similar formulas.
Eq.~\eqref{eq:ssm-recurrence} is equivalent to the convolution
\begin{equation}
  \label{eq:ssm-convolution}
\begin{aligned}
  \overline{K} =
  (\overline{C}\overline{B}, \overline{C}\overline{A}\overline{B}, \overline{C}\overline{A}^2\overline{B})
  \qquad
  y = \overline{K} \ast x %
  .
\end{aligned}
\end{equation}

SSMs can be viewed as particular instantiations of CNNs and RNNs that inherit their efficiency at both training and inference and overcome their limitations.
As an RNN, \eqref{eq:ssm-recurrence} is a special case of \eqref{eq:rnn} where \( f \)  and \( g \) are linear, giving it much simpler structure that avoids the optimization issues found in RNNs.
As a CNN, \eqref{eq:ssm-convolution} is a special case of \eqref{eq:cnn} with an unbounded convolution kernel, overcoming the context size limitations of vanilla CNNs.

\subsection{S4}
\label{sec:s4}

S4 is a particular instantiation of SSM that parameterizes  \( A \) as a \emph{diagonal plus low-rank} (DPLR) matrix,
\( A = \Lambda + p q^* \) \citep{gu2022efficiently}.
This parameterization has two key properties.
First, this is a structured representation that allows faster computation---S4 uses a special algorithm to compute the convolution kernel
\( \Kbar \) \eqref{eq:ssm-convolution} very quickly. %
Second, this parameterization includes certain special matrices called HiPPO matrices \citep{gu2020hippo}, which theoretically and empirically allow the SSM to capture long-range dependencies better.
In particular, HiPPO specifies a special equation \( h'(t) = Ah(t) + Bx(t) \) with closed formulas for \( A \) and \( B \).
This particular \( A \) matrix can be written in DPLR form, and S4 initializes its \( A \) and \( B \) matrices to these.

\section{Model}
\label{sec:method}

\begin{table*}[t!]
    \centering
    \caption{Summary of music and speech datasets used for unconditional AR generation experiments.}
    \sc
    \resizebox{\linewidth}{!}{
        \begin{tabular}{lllllll}
            \toprule
            Category & Dataset    & Total Duration & Chunk Length & Sampling Rate & Quantization   & Splits (train-val-test)        \\
            \midrule
            Music    & Beethoven  & $10$ hours     & $8$s  & $16$kHz       & $8$-bit linear & \citet{mehri2017samplernn} \\
            Music    & YouTubeMix & $4$ hours      & $8$s      & $16$kHz       & $8$-bit mu-law & $88\%-6\%-6\%$  \\
            Speech   & SC09       & $5.3$ hours    & $1$s            & $16$kHz       & $8$-bit mu-law & \citet{Warden2018SpeechCA}       \\
            \bottomrule
        \end{tabular}
    }
    \label{tab:datasets}
\end{table*}

\name{} consists of two main components.
First, S4 layers are the core component of our neural network architecture, to capture long context while being fast at both training and inference.
We provide a simple improvement to S4 that addresses instability at generation time~(\cref{sec:stable}).
Second, \name{} connects stacks of S4 layers together in a simple multi-scale architecture (\cref{sec:architecture}).

\subsection{Stabilizing S4 for Recurrence}
\label{sec:stable}

We use S4's representation and algorithm as a black box, with one technical improvement: we use the parameterization \( \Lambda - pp^* \) instead of \( \Lambda + pq^* \).
This amounts to essentially tying the parameters \( p \) and \( q \) (and reversing a sign).

To justify our parameterization, we first note that it still satisfies the main properties of S4's representation (\cref{sec:s4}).
 First, this is a special case of a DPLR matrix, and can still use S4's algorithm for fast computation.
 Moreover, we show that the HiPPO matrices still satisfy this more restricted structure; in other words, we can still use the same initialization which is important to S4's performance.
\begin{proposition}%
  \label{prop:hippo}
  All three HiPPO matrices from \citep{gu2020hippo} are unitarily equivalent to a matrix of the form
  \(
    \A = \diag - \p \p^*
  \)
  for diagonal \( \diag \) and \( \p \in \mathbbm{R}^{N \times r} \) for \( r=1 \) or \( r=2 \).
  Furthermore, all entries of \( \diag \) have real part \( 0 \) (for HiPPO-LegT and HiPPO-LagT) or \( -\frac{1}{2} \) (for HiPPO-LegS).
\end{proposition}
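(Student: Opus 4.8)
The plan is to recall the explicit closed forms of the three HiPPO matrices from \citep{gu2020hippo} and to exhibit, for each, an explicit diagonal similarity (or, more precisely, a conjugation by a real diagonal matrix $\D$) that symmetrizes the low-rank correction. The starting observation is that each HiPPO matrix already has the shape $\A = \diag_0 + \p\q^*$ where $\diag_0$ is (lower-)triangular-plus-something simple, $\p$ is a fixed vector, and $\q$ is a fixed vector whose entries differ from those of $\p$ only by known scalar factors such as $\sqrt{2n+1}$. Conjugating by a diagonal $\D$ sends $\p\q^* \mapsto (\D\p)(\D^{-1}\q)^*$, so the strategy is to choose $\D$ so that $\D\p$ and $\D^{-1}\q$ are parallel (equal up to sign); since the ratio $q_n/p_n$ is a known positive quantity like $2n+1$, one takes $\D_{nn} = (q_n/p_n)^{1/4}$ (adjusting for the sign). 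One must then check that this same conjugation leaves the ``diagonal'' part in an acceptable form, i.e.\ that $\D \diag_0 \D^{-1}$ is itself diagonal plus a term that can be absorbed into $-\p\p^*$, or is diagonalizable by a further \emph{unitary} change of basis.

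Concretely, I would proceed matrix by matrix. For HiPPO-LegS, the matrix has entries built from $\sqrt{(2n+1)(2k+1)}$ off the diagonal and $-(n+1)$ on it; splitting off the rank-one piece $\p\p^*$ with $\p_n = \sqrt{n+1/2}$ (up to normalization) leaves a skew-symmetric remainder plus $-\tfrac12 I$, and skew-symmetric real matrices are unitarily diagonalizable with purely imaginary eigenvalues, so after the diagonal conjugation and then a unitary diagonalization of the skew part one lands on $\diag - \p\p^*$ with $\mathrm{Re}(\diag) = -\tfrac12$. For HiPPO-LegT and HiPPO-LagT one does the analogous decomposition: the non-symmetric part, after the diagonal rescaling, becomes skew-symmetric (contributing the $r=1$ or $r=2$ low-rank term and an imaginary-eigenvalue skew piece), giving $\mathrm{Re}(\diag) = 0$. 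The ranks $r=1$ versus $r=2$ come from whether one or two rank-one corrections are needed to symmetrize; I expect LegT to need $r=2$ because of boundary terms at both endpoints of the interval, while LagT and LegS need only $r=1$.

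The main obstacle I anticipate is bookkeeping rather than conceptual: verifying that after the real diagonal conjugation the leftover (non-low-rank) part is \emph{exactly} skew-symmetric (so that it is unitarily diagonalizable and contributes nothing to the real part beyond the uniform shift), and tracking the constant diagonal shift correctly ($0$ vs.\ $-\tfrac12$) through the algebra. A secondary subtlety is that ``unitarily equivalent'' must be interpreted over $\mathbb{C}$ (the diagonalization of the skew part introduces complex eigenvectors), so I would be careful to state that the overall transformation is the composition of a real diagonal conjugation with a complex unitary one, and that $\diag$ and $\p$ are allowed to be complex while $\p$ remains $N\times r$. I would also double-check that the diagonal rescaling $\D$ is invertible, i.e.\ that $q_n/p_n \neq 0$ for all $n$, which holds since these ratios are strictly positive.

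Once these three case analyses are in hand, the proposition follows immediately by collecting the three statements; no further argument is needed beyond noting that a composition of a similarity and a unitary is again of the allowed form because the DPLR structure $\diag - \p\p^*$ is preserved under unitary conjugation (the diagonal stays diagonal after re-diagonalizing, and $\p\p^* \mapsto (U\p)(U\p)^*$ keeps the Hermitian rank-$r$ form).
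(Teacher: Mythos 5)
Your plan is essentially the paper's own proof: for each HiPPO matrix one exhibits (after a diagonal rescaling in the LegT case) a decomposition into a skew-symmetric matrix, a constant real shift ($0$ or $-\tfrac{1}{2}I$), and a negative-semidefinite rank-$r$ term $pp^*$ (with $r=2$ for LegT and $r=1$ for LagT and LegS), and then unitarily diagonalizes the skew-symmetric part, exactly as you describe. The only differences are that the paper simply writes out these decompositions explicitly rather than deriving a rescaling (for LagT and LegS no conjugation is needed at all---the symmetrization is achieved by splitting off a symmetric rank-one matrix), and a minor slip in your heuristic: the symmetrizing diagonal should satisfy $D_{nn}=(q_n/p_n)^{1/2}$ up to a constant, not the fourth root.
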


 Next, we discuss how this parameterization makes S4 stable.
 The high-level idea is that stability of SSMs involves the spectrum of the state matrix \( \A \), which is more easily controlled because \( -pp^* \) is a negative semidefinite matrix (i.e., we know the signs of its spectrum).

 \begin{definition}%
   A \emph{Hurwitz matrix} \( \A \) is one where every eigenvalue has negative real part.
 \end{definition}
 Hurwitz matrices are also called stable matrices, because they imply that the SSM~\eqref{eq:ssm} is asymptotically stable.
 In the context of discrete time SSMs, we can easily see why \( \A \) needs to be a Hurwitz matrix from first principles with the following simple observations.

 First, unrolling the RNN mode (equation \eqref{eq:ssm-recurrence}) involves powering up \( \Abar \) repeatedly, which is stable if and only if all eigenvalues of \( \Abar \) lie inside or on the unit disk.
 Second, the transformation \eqref{eq:ssm-discretization} maps the complex left half plane (i.e. negative real part) to the complex unit disk.
 Therefore computing the RNN mode of an SSM (e.g. in order to generate autoregressively) requires \( \A \) to be a Hurwitz matrix.

However, controlling the spectrum of a general DPLR matrix is difficult;
empirically, we found that S4 matrices generally became non-Hurwitz after training.
We remark that this stability issue only arises when using S4 during autoregressive generation, because S4's convolutional mode during training does not involve powering up \( \Abar \) and thus does not require a Hurwitz matrix.
Our reparameterization makes controlling the spectrum of \( \Abar \) easier.

\begin{proposition}%
  \label{prop:stable}
  A matrix \(  A = \Lambda - p p^* \) is Hurwitz if all entries of \( \Lambda \) have negative real part.
\end{proposition}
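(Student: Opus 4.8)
The plan is to run the standard Rayleigh-quotient argument on an arbitrary eigenvalue of $A$. Let $\lambda \in \mathbbm{C}$ be any eigenvalue of $A = \diag - \p\p^*$, and pick a unit eigenvector $v \in \mathbbm{C}^N$, so that $Av = \lambda v$ and $v^* v = 1$. Left-multiplying by $v^*$ yields $\lambda = v^* A v = v^*\diag v - v^*\p\p^* v$, and the whole proof reduces to taking real parts of both sides and showing the right-hand side is strictly negative.

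For the diagonal term, since $\diag$ is diagonal the matrix $\diag + \diag^*$ is diagonal with entries $2\,\mathrm{Re}(\diag_{ii})$, so $\mathrm{Re}(v^*\diag v) = \tfrac12 v^*(\diag+\diag^*)v = \sum_i |v_i|^2\,\mathrm{Re}(\diag_{ii})$. Because $\sum_i |v_i|^2 = 1$ and every $\mathrm{Re}(\diag_{ii}) < 0$ by hypothesis, this is a convex combination of strictly negative reals, hence strictly negative. For the low-rank term, $\p\p^*$ is Hermitian positive semidefinite (a Gram matrix), so $v^*\p\p^* v = \lVert \p^* v\rVert_2^2 \ge 0$ is real and nonnegative. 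Combining, $\mathrm{Re}(\lambda) = \sum_i |v_i|^2\,\mathrm{Re}(\diag_{ii}) - \lVert \p^* v\rVert_2^2 < 0$. Since $\lambda$ was arbitrary, $A$ is Hurwitz.

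I expect essentially no real obstacle here; the one point that needs care is the \emph{strictness} of the first inequality, which is exactly where both the diagonal structure of $\diag$ and the "all entries have negative real part" hypothesis are used. Note in particular that positive semidefiniteness of $\p\p^*$ by itself would only give $\mathrm{Re}(\lambda)\le 0$, so the strict negativity must come entirely from $\diag$; if $\p\p^*$ were only assumed Hermitian (not semidefinite) the statement could fail. An alternative and arguably shorter route is to observe directly that $A + A^* = (\diag+\diag^*) - 2\p\p^*$ is negative definite and invoke the Lyapunov stability criterion, but the explicit eigenvector computation above is the most elementary and self-contained.
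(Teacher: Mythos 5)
Your proof is correct and is essentially the paper's argument: the paper also bounds $\mathfrak{Re}(\lambda) = \tfrac12 v^*(\A+\A^*)v$ for a unit eigenpair, using that $\diag+\diag^*$ has (strictly) negative diagonal entries and that $-\p\p^*$ is negative semidefinite, which is exactly your Rayleigh-quotient computation split into the same two terms (and is the "alternative route" you mention at the end). If anything, your write-up is slightly more careful on the strictness point---the paper labels $\A+\A^*$ merely negative semidefinite while using a strict inequality---but the underlying idea is identical.
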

\begin{proof}%
  We first observe that if \( \A + \A^* \) is negative semidefinite (NSD), then \( \A \) is Hurwitz.
  This follows because \( 0 > v^* (\A+\A^*) v = (v^*\A v) + (v^* \A v)^* = 2\mathfrak{Re}(v^*\A v) = 2\lambda \) for any (unit length) eigenpair \( (\lambda, v) \) of \( \A \).

  Next, note that the condition implies that \( \diag + \diag^* \) is NSD (it is a real diagonal matrix with non-positive entries).
  Since the matrix \( - \p \p^* \) is also NSD, then so is \( \A + \A^* \).
\end{proof}

\cref{prop:stable} implies that with our tied reparameterization of S4, controlling the spectrum of the learned \( \A \) matrix becomes simply controlling the the diagonal portion \( \diag \).
This is a far easier problem than controlling a general DPLR matrix, and can be enforced by regularization or reparameteration (e.g. run its entries through an \( \exp \) function).
In practice, we found that not restricting \( \diag \) and letting it learn freely led to stable trained solutions.

\subsection{\name{} Architecture}
\label{sec:architecture}

\cref{fig:architecture} illustrates the complete \name{} architecture.

\textbf{S4 Block.}
\name{} is built around repeated deep neural network blocks containing our modified S4 layers,
following the same original S4 model.
Compared to \citet{gu2022efficiently}, we add additional pointwise linear layers after the S4 layer in the style of the \emph{feed-forward network} in Transformers or the \emph{inverted bottleneck layer} in CNNs~\citep{liu2022convnet}.
Model details are in \cref{sec:model-details}.

\textbf{Multi-scale Architecture.}
\name{} uses a simple architecture for autoregressive generation that consolidates information from the raw input signal at multiple resolutions.
The \name{} architecture consists of multiple tiers, with each tier composed of a stack of residual S4 blocks.
The top tier processes the raw audio waveform at its original sampling rate, while lower tiers process downsampled versions of the input signal.
The output of lower tiers is upsampled and combined with the input to the tier above it in order to provide a stronger conditioning signal.
This architecture is inspired by related neural network architectures for AR modeling that incorporate multi-scale characteristics such as SampleRNN and PixelCNN++~\citep{salimans2017pixelcnn++}.

The pooling is accomplished by simple reshaping and linear operations.
Concretely, an input sequence \( x \in \mathbbm{R}^{T \times H} \) with context length \( T \) and hidden dimension size \( H \) is transformed through these shapes:
\iftoggle{arxiv}
{
\begin{align*}
  (\textbf{Down-pool}) \, (T, H) \xrightarrow{\text{reshape}} (T/p, p \cdot H) \xrightarrow{\text{linear}} (T / p, q \cdot H) \\
  (\textbf{Up-pool}) \, (T, H) \xrightarrow{\text{linear}} (T, p \cdot H / q) \xrightarrow{\text{reshape}} (T \cdot p, H / q).
\end{align*}}
{\small
\begin{align*}
  (\textbf{Down-pool}) & (T, H) \xrightarrow{\text{reshape}} (T/p, p \cdot H) \xrightarrow{\text{linear}} (T / p, q \cdot H) \\
  (\textbf{Up-pool}) & (T, H) \xrightarrow{\text{linear}} (T, p \cdot H / q) \xrightarrow{\text{reshape}} (T \cdot p, H / q).
\end{align*}}

\normalsize
Here, \( p \) is the \emph{pooling factor} and \( q \) is an \emph{expansion factor} that increases the hidden dimension while pooling.
In our experiments, we always fix \( p=4, q=2 \)
and use a total of just two pooling layers (three tiers).

We additionally note that in AR settings, the up-pooling layers must be shifted by a time step to ensure causality.

\textbf{Bidirectional S4.}
Like RNNs, SSMs are causal with an innate time dimension (equation \eqref{eq:ssm}).
For non-autoregressive tasks, we consider a simple variant of S4 that is bidirectional.
We simply pass the input sequence through an S4 layer, and also reverse it and pass it through an independent second S4 layer.
These outputs are concatenated and passed through a positionwise linear layer as in the standard S4 block.
\begin{align*}
  y = \mathsf{Linear}(\mathsf{Concat}(\mathsf{S4}(x), \mathsf{rev}(\mathsf{S4}(\mathsf{rev}(x)))))
\end{align*}
We show that bidirectional S4 outperforms causal S4 when autoregression is not required (\cref{sec:diffwave}).

\begin{table}[!t]
    \centering
    \caption{
      Results on AR modeling of Beethoven, a benchmark task from~\citet{mehri2017samplernn}---\name{} outperforms all baselines while training faster.
    }
    \sc
    
    \resizebox{\iftoggle{arxiv}{0.6\linewidth}{\linewidth}}{!}{
        \begin{tabular}{@{}lllll@{}}
            \toprule
            Model & Context       & NLL & @$200$K steps & @$10$ hours   \\
            \midrule
            SampleRNN$^*$          & $1024$        & $1.076$                  & $-$          & $-$          \\
            WaveNet$^*$            & $4092$        & $1.464$                  & $-$          & $-$          \\
            \midrule
            SampleRNN$^\dagger$    & $1024$        & $1.125$                  & $1.125$      & $1.125$      \\ %
            WaveNet$^\dagger$      & $4092$        & $1.032$                  & $1.088$      & $1.352$      \\ %
            \name{}                & $\bm{128000}$ & $\bm{0.946}$             & $\bm{1.007}$ & $\bm{1.095}$ \\ %
            \bottomrule
            \multicolumn{5}{l}{\footnotesize{$^*$Reported in \citet{mehri2017samplernn} \qquad $^\dagger$Our replication}}
        \end{tabular}
    }
    \label{tab:beethoven}
\end{table}

\begin{table}[!t]
  \centering
  \small
  \caption{
    Effect of context length on the performance of \name{} on Beethoven, controlling for computation and sample efficiency. \name{} is able to leverage information from longer contexts.
  }
  \sc
  \resizebox{\iftoggle{arxiv}{0.6\linewidth}{\linewidth}}{!}{
    \begin{tabular}{llll}
      \toprule
      \multirow{2}{*}{Context Size} &
      \multirow{2}{*}{Batch Size}   & \multicolumn{2}{c}{NLL} \\
      \cmidrule{3-4}
                                    &                          & $200$K steps & $10$ hours   \\
      \midrule
      $1$ second                    & $8$                      & $1.364$      & $1.433$      \\
      $2$ seconds                   & $4$                      & $1.229$      & $1.298$      \\
      $4$ seconds                   & $2$                      & $1.120$      & $1.234$      \\
      $8$ seconds                   & $1$                      & $\bm{1.007}$ & $\bm{1.095}$ \\
      \bottomrule
    \end{tabular}
  }
  \label{tab:beethoven-context-length}
\end{table}

\section{Experiments}

We evaluate \name{} on several benchmark audio generation and unconditional speech generation tasks in both AR and non-AR settings,
validating that \name{} generates more globally coherent waveforms than baselines while having higher computational and sample efficiency.

{\bf Baselines.} We compare \name{} to the leading
AR models for unconditional waveform generation, SampleRNN and WaveNet.
In \cref{sec:diffwave}, we show that \name{} can also improve non-AR models. 

{\bf Datasets.} We evaluate \name{} on datasets spanning music and speech generation (\cref{tab:datasets}).
\begin{itemize}[leftmargin=*,itemsep=0pt]
  \item {\bf Beethoven.} A benchmark music dataset~\cite{mehri2017samplernn}, consisting of Beethoven's piano sonatas. %
  \item {\bf YouTubeMix.} Another piano music dataset~\cite{deepsound} with higher-quality recordings than Beethoven. %
  \item {\bf SC09.} A benchmark speech dataset~\cite{donahue2019adversarial}, consisting of $1$-second recordings of the digits ``zero'' through ``nine'' spoken by many different speakers. %
\end{itemize}

All datasets are quantized using $8$-bit quantization, either linear or $\mu$-law,
depending on prior work.
Each dataset is divided into non-overlapping chunks;
the SampleRNN baseline is trained using TBPTT, while WaveNet and \name{} are trained on entire chunks.
All models are trained to predict the negative log-likelihood (NLL) of individual audio samples; results are reported in base $2$,
also known as bits per byte (BPB) because of the one-byte-per-sample quantization.
All datasets were sampled at a rate of $16$kHz.
\cref{tab:datasets} summarizes characteristics of the datasets and processing.

\subsection{Unbounded Music Generation}
\label{sec:music}

Because music audio is not constrained in length,
AR models are a natural approach for music generation, 
since they can generate samples longer than the context windows they were trained on. 
We validate that \name{} can leverage longer contexts to perform
music waveform
generation more effectively than baseline AR methods.

We follow the setting of \citet{mehri2017samplernn} for the Beethoven dataset.
\cref{tab:beethoven} reports results found in prior work, as well as our reproductions.
In fact, our WaveNet baseline is much stronger than the one implemented in prior work.
\name{} substantially improves the test NLL by $0.09$ BPB compared to the best baseline.
\cref{tab:beethoven-context-length} ablates the context length used in training, showing that \name{} significantly benefits from seeing longer contexts,
and is able to effectively leverage extremely long contexts (over $100$k steps) when predicting next samples.

Next, we evaluate all baselines on YouTubeMix. Table~\ref{tab:youtube} shows that \name{} substantially outperforms SampleRNN and WaveNet on NLL. %
Following \citet{dieleman2018challenge} (protocol in \cref{sec:mos}), we
measured
mean opinion scores (MOS) for audio fidelity and musicality
for
$16$s samples generated by each method
(longer than the training context).
All methods have similar fidelity, but \name{} substantially improves musicality by around $0.40$ points, validating that it can generate long samples more coherently than other methods.

\begin{table}[t]
    \centering
    \caption{
        Negative log-likelihoods and mean opinion scores on YouTubeMix.
        As suggested by \citet{dieleman2018challenge}, we encourage readers to form their own opinions by referring to the sound examples in our supplementary material.
    }
    \label{tab:youtube}
    \sc
    \resizebox{\iftoggle{arxiv}{0.6\linewidth}{\linewidth}}{!}{
        \begin{tabular}{llll}
            \toprule
            Model     & Test NLL     & MOS (fidelity)       & MOS (musicality)     \\
            \midrule
            SampleRNN & $1.723$      & $\bm{2.98 \pm 0.08}$ & $1.82 \pm 0.08$      \\
            WaveNet   & $1.449$      & $2.91 \pm 0.08$      & $2.71 \pm 0.08$      \\
            \name     & $\bm{1.294}$ & $2.84 \pm 0.09$      & $\bm{3.11 \pm 0.09}$ \\
            \midrule
            Dataset   & -            & $3.76 \pm 0.08$      & $4.59 \pm 0.07$      \\
            \bottomrule
        \end{tabular}
    }
\end{table}

\begin{figure}[!t]
    \centering
    \caption{{\bf (Sample Efficiency)} Plot of validation NLL (in bits) vs. wall clock time (hours) on the SC09 dataset.}
    \includegraphics[width=\iftoggle{arxiv}{0.6\linewidth}{\linewidth}]{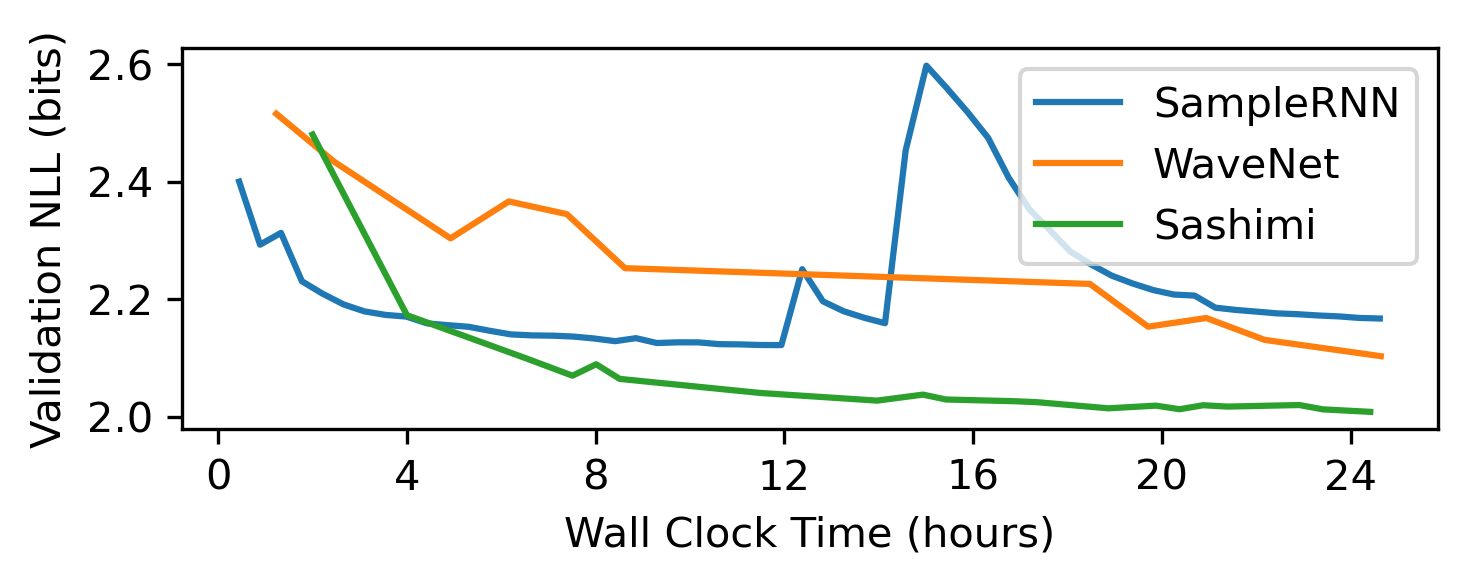}
    \label{fig:sc09-nll-vs-wallclock}
    \vspace*{-2em}
\end{figure}

\cref{fig:sc09-nll-vs-wallclock} shows that \name{} trains stably and more efficiently than baselines in wall clock time.
\cref{sec:experiments-additional}, \cref{fig:throughput} also analyzes the peak throughput of different AR models as a function of batch size.

\begin{table}[!h]
  \centering
  \caption{
    Architectural ablations and efficiency tradeoffs on YouTubeMix.
    (\emph{Top}) AR models and baselines at different sizes.
    (\emph{Bottom}) Ablating the pooling layers of \name{}.
  }
  \sc
  \resizebox{\iftoggle{arxiv}{0.75\linewidth}{\linewidth}}{!}{
    \begin{tabular}{lllll}
      \toprule
      Model              & NLL     & Time/epoch & Throughput        & Params   \\
      \midrule
      SampleRNN$-2$ tier & $1.762$ & $800$s     & $112$K samples/s  & $51.85$M \\
      SampleRNN$-3$ tier & $1.723$ & $850$s     & $116$K samples/s  & $35.03$M \\
      \midrule
      WaveNet$-512$      & $1.467$ & $1000$s    & $185$K samples/s  & $2.67$M  \\
      WaveNet$-1024$     & $1.449$ & $1435$s    & $182$K samples/s  & $4.24$M  \\
      \midrule
      \name$-2$ layers   & $1.446$ & $205$s     & $596$K  samples/s & $1.29$M  \\
      \name$-4$ layers   & $1.341$ & $340$s     & $316$K  samples/s & $2.21$M  \\
      \name$-6$ layers   & $1.315$ & $675$s     & $218$K  samples/s & $3.13$M  \\
      \name$-8$ layers   & $1.294$ & $875$s     & $129$K  samples/s & $4.05$M  \\
      \midrule
      \midrule
      Isotropic S4$-4$ layers & $1.429$ & $1900$s & $144$K samples/s & $2.83$M \\
      Isotropic S4$-8$ layers & $1.524$ & $3700$s & $72$K samples/s & $5.53$M \\
      \bottomrule
    \end{tabular}
  }
  \label{tab:computation-ablation}
\end{table}

\begin{table*}[t]
    \centering
    \caption{
      (\textbf{SC09}) Automated metrics and human opinion scores. (\emph{Top}) \name{} is the first AR model that can unconditionally generate high quality samples on this challenging dataset of fixed-length speech clips with highly variable characteristics.
      (\emph{Middle}) As a flexible architecture for general waveform modeling, \name{} sets a true state-of-the-art when combined with a recent diffusion probabilistic model.
    }
    \sc
    \resizebox{\linewidth}{!}{
        \begin{tabular}{lllllllllllll}
            \toprule
            \multirow{2}{*}{Model} & \multirow{2}{*}{Params} & \multirow{2}{*}{NLL} & \multirow{2}{*}{FID $\downarrow$} & \multirow{2}{*}{IS $\uparrow$} & \multirow{2}{*}{mIS $\uparrow$} & \multirow{2}{*}{AM $\downarrow$} & \multirow{2}{*}{\shortstack{Human (\( \kappa \)) \\ Agreement}}                      & \multicolumn{3}{c}{MOS}          \\
            \cmidrule{9-11}
                                   &                         &                      &                                   &                                &                                 &                                  &                                                   & Quality                          & Intelligibility                   & Diversity                        \\
            \midrule
            SampleRNN              & $35.0$M                 & $2.042$              & $8.96$                            & $1.71$                         & $3.02$                          & $1.76$                           & $0.321$                                           & $1.18 \pm 0.04$                  & $1.37 \pm 0.02$                   & $2.26 \pm 0.10$                  \\
            WaveNet                & $4.2$M                  & $1.925$              & $5.08$                            & $2.27$                         & $5.80$                          & $1.47$                           & $0.408$                                           & $1.59 \pm 0.06$                  & $1.72 \pm 0.03$                   & $2.70 \pm 0.11$                  \\
            \name{}                & $4.1$M                  & $\bm{1.891}$         & $\bm{1.99}$                       & $\bm{4.12}$                    & $\bm{24.57}$                    & $\bm{0.90}$                      & $\bm{0.832}$                                           & $\bm{3.29 \pm 0.07}$             & $\bm{3.53 \pm 0.04}$              & $\bm{3.26 \pm 0.09}$             \\
            \midrule
            WaveGAN                & $19.1$M                 & -                    & $2.03$                            & $4.90$                         & $36.10$                         & $0.80$                           & $0.840$                                           & $2.98 \pm 0.07$                  & $3.27 \pm 0.04$                   & $3.25 \pm 0.09$                  \\ %
            DiffWave               & $24.1$M                 & -                    & $1.92$                            & $5.26$                         & $51.21$                         & $0.68$                           & $0.917$                                           & $4.03 \pm 0.06$                  & $4.15 \pm 0.03$                   & $\bm{3.45 \pm 0.09}$                  \\
            ~~w/ \name{}           & $23.0$M                 & -                    & $\bm{1.42}$                       & $\bm{5.94}$                    & $\bm{69.17}$                    & $\bm{0.59}$                      & $\bm{0.953}$                                           & $\bm{4.20 \pm 0.06}$                  & $\bm{4.33 \pm 0.03}$                   & $3.28 \pm 0.11$                  \\
            \midrule
            Train                  & -                       & -                    & $0.00$                            & $8.56$                         & $292.5$                         & $0.16$                           & \multirow{2}{*}{$0.921$}                          & \multirow{2}{*}{$4.04 \pm 0.06$} & \multirow{2}{*}{$4.27 \pm 0.03$}  & \multirow{2}{*}{$3.59 \pm 0.09$} \\
            Test                   & -                       & -                    & $0.02$                            & $8.33$                         & $257.6$                         & $0.19$                           &                                                   &                                  &                                  \\
            \bottomrule
        \end{tabular}
    }
    \label{tab:sc09}
\end{table*}

\subsection{Model ablations: Slicing the \name{}}
\label{sec:ablations}

We validate our technical improvements and ablate \name{}'s architecture.

\textbf{Stabilizing S4.}
We consider how different parameterizations of S4's representation affect downstream performance (\cref{sec:stable}).
Recall that S4 uses a special matrix \( A = \Lambda + pq^* \) specified by HiPPO, which theoretically captures long-range dependencies (\cref{sec:s4}).
We ablate various parameterizations of a small \name{} model ($2$ layers, $500$ epochs on YouTubeMix).
Learning $A$ yields consistent improvements, but becomes unstable at generation.
Our reparameterization allows \( A \) to be learned while preserving stability, agreeing with the analysis in \cref{sec:stable}. A visual illustration of the spectral radii of the learned $\Abar$ in the new parameterization is provided in \cref{fig:s4-stability}.

\begin{table}[H]
    \centering
    \sc
    \small
    \resizebox{\iftoggle{arxiv}{0.5\linewidth}{0.9\linewidth}}{!}{
      \begin{tabular}{llll}
        \toprule
        Learned        & Frozen         & NLL     & Stable generation \\
        \midrule
        $-$            & $\diag + pq^*$ & $1.445$ & \cmark            \\
        $\diag + pq^*$ & $-$            & $1.420$ & \xmark            \\
        $\diag - pp^*$ & $-$            & $1.419$ & \cmark            \\
        \bottomrule
      \end{tabular}
    }
    \label{tab:s4-ablation}
\end{table}

\begin{figure}
    \centering
    \caption{\textbf{(S4 Stability)} Comparison of spectral radii for all $\Abar$ matrices in a SaShiMi model trained with different S4 parameterizations. The instability in the standard S4 parameterization is solved by our Hurwitz parameterization.}
    \includegraphics[width=\iftoggle{arxiv}{0.5\linewidth}{\linewidth}]{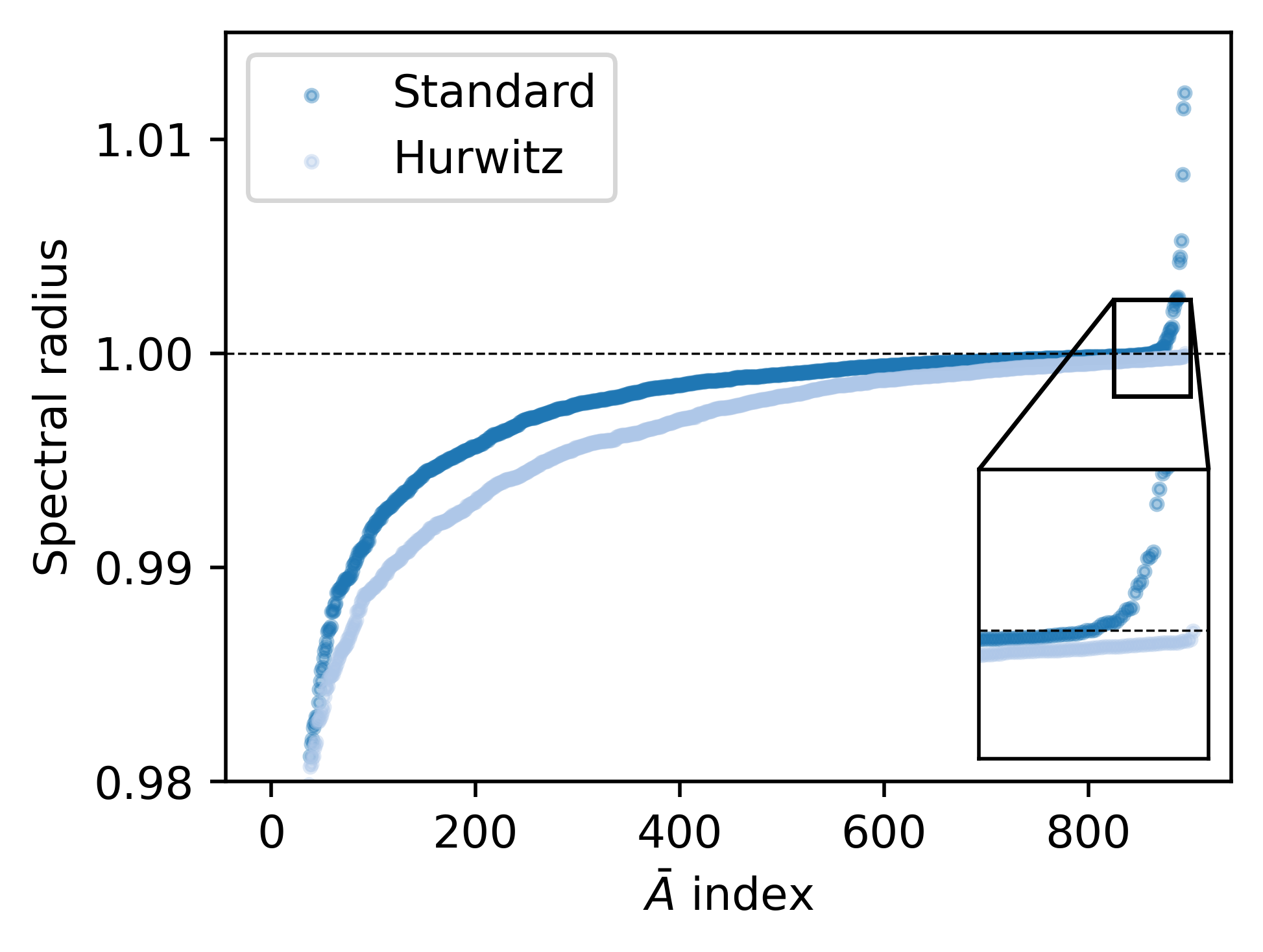}
    \label{fig:s4-stability}
\end{figure}

\textbf{Multi-scale architecture.}
We investigate the effect of \name{}'s architecture (\cref{sec:architecture}) against
isotropic S4 layers on YouTubeMix.
Controlling for parameter counts, adding pooling in \name{} leads to substantial improvements in computation and modeling (\cref{tab:computation-ablation}, Bottom).

\textbf{Efficiency tradeoffs.}
We ablate different sizes of the \name{} model on YouTubeMix to show its performance tradeoffs along different axes.

\cref{tab:computation-ablation} (Top) shows that a single \name{} model simultaneously outperforms all baselines on quality (NLL) and computation at both training and inference, with a model more than 3X smaller. 
Moreover, \name{} improves monotonically with depth, 
suggesting that quality can be further improved at the cost of additional computation.

\iftoggle{arxiv}
{
    \begin{table*}[t]
        \centering
        \caption{
          (\textbf{SC09 diffusion models}.)
          Beyond AR, \name{} can be flexibly combined with other generative modeling approaches, improving on the state-of-the-art DiffWave model by simply replacing the architecture.
          (\emph{Top}) A parameter-matched \name{} architecture with \emph{no tuning} outperforms the best DiffWave model.
          (\emph{Middle}) \name{} is consistently better than WaveNet at all stages of training; a model trained on half the samples matches the best DiffWave model.
          (\emph{Bottom}) The WaveNet backbone is extremely sensitive to architecture parameters such as size and dilation schedule; a small model fails to learn. We also ablate the bidirectional S4 layer, which outperforms the unidirectional one.
        }
        \sc
        \resizebox{\linewidth}{!}{
            \begin{tabular}{llllllllllll}
                \toprule
                            \multirow{2}{*}{Architecture} & \multirow{2}{*}{Params} & \multirow{2}{*}{Training Steps} & \multirow{2}{*}{FID $\downarrow$} & \multirow{2}{*}{IS $\uparrow$} & \multirow{2}{*}{mIS $\uparrow$} & \multirow{2}{*}{AM $\downarrow$} & \multirow{2}{*}{NDB $\downarrow$} & \multirow{2}{*}{\shortstack{Human (\( \kappa \))\\Agreement}} & \multicolumn{3}{c}{MOS} \\
                \cmidrule{10-12}
                &                         &                                   &                                &                                 &                                  &                                   & & & Quality                & Intelligibility     & Diversity          \\
    
                \midrule
                \name{}        & 23.0M      & 800k           & $1.42$              & $5.94$           & $69.17$            & $0.59$             & $0.88$              &   $0.953$                             &  $4.20 \pm 0.06$               & $4.33 \pm 0.03$ & $3.28 \pm 0.11$                      \\
                WaveNet        & 24.1M      & 1000k          & $1.92$           & $5.26$        & $51.21$        & $0.68$          & $0.88$           & $0.917$                        & $4.03 \pm 0.06$ & $4.15 \pm 0.03$        & $3.45 \pm 0.09$ \\
                \midrule
                \name{}        & 23.0M      & 500k           & $2.08$           & $5.68$        & $51.10$        & $0.66$          & $0.76$           & $0.923$                        & $3.99 \pm 0.06$ & $4.13 \pm 0.03$        & $3.38 \pm 0.10$ \\
                WaveNet        & 24.1M      & 500k           & $2.25$           & $4.68$        & $34.55$        & $0.80$           & $0.90$            & $0.848$                        & $3.53 \pm 0.07$ & $3.69 \pm 0.03$        & $3.30 \pm 0.08$ \\
                \midrule
                \name{} (uni.) & 7.1M       & 500k           & $2.70$              & $3.62$           & $17.96$            & $1.03$             & $0.90$              &  $0.829$                              & $3.08 \pm 0.07$                 & $3.29 \pm 0.04$ & $3.26 \pm 0.08$                       \\
                \name{}        & 7.5M       & 500k           & $1.70$              & $5.00$           & $40.27$        & $0.72$             & $0.90$              & $0.934$                               &  $3.83 \pm 0.07$               &     $4.00 \pm 0.03$ & $3.34 \pm 0.09$                   \\
                WaveNet        & 6.8M       & 500k           & $4.53$              & $2.80$           & $9.02$        & $1.30$             & $0.94$              &   $0.446$                             &  $1.85 \pm 0.08$               & $1.90 \pm 0.03$ & $3.03 \pm 0.10$                      \\
                \bottomrule
            \end{tabular}
        }
        \label{tab:sc09-diffusion}
    \end{table*}
}
{}

\subsection{Unconditional Speech Generation}
\label{sec:speech}
\label{sec:diffwave}

The SC09 spoken digits dataset 
is 
a challenging unconditional speech generation benchmark, as it contains several axes of variation (words, speakers, microphones, alignments). 
Unlike the music setting (\cref{sec:music}), SC09 contains audio of \emph{bounded} length ($1$-second utterances). %
To date, AR waveform models trained on this benchmark have yet to generate spoken digits which are consistently intelligible to humans.%
\footnote{While AR waveform models can produce intelligible speech in the context of TTS systems, this capability requires conditioning on rich intermediaries like spectrograms or linguistic features.}
In contrast, non-AR approaches are capable of achieving global coherence on this dataset, as first demonstrated by WaveGAN~\citep{donahue2019adversarial}. %

Although our primary focus thus far has been the challenging testbed of AR waveform modeling, 
\name{} can also be used as a flexible neural network architecture for audio generation more broadly. 
We demonstrate this potential by integrating \name{} into DiffWave~\citep{kong2021diffwave}, a diffusion-based method for non-AR waveform generation which represents the current state-of-the-art for SC09. 
DiffWave uses the original WaveNet architecture as its backbone---here, we simply replace WaveNet with a \name{} model containing a similar number of parameters. 

We compare \name{} to strong baselines on SC09 in both the AR and non-AR (via DiffWave) settings by measuring several standard quantitative and qualitative metrics such as Frech\'et Inception Distance (FID) and Inception Score (IS) (\cref{sec:evaluations}). 
We also conduct a qualitative evaluation where we ask several annotators to label the generated digits and then compute their inter-annotator agreement. %
Additionally, as in~\citet{donahue2019adversarial}, we ask annotators for their subjective opinions on overall audio quality, intelligibility, and speaker diversity, and report MOS for each axis. Results for all models appear in \cref{tab:sc09}.

\textbf{Autoregressive.} 
\name{} substantially outperforms other AR waveform models on all metrics, and achieves $2\times$ higher MOS for both quality and intelligibility. 
Moreover, annotators agree on labels for samples from \name{} far more often than they do for samples from other AR models, 
suggesting that \name{} generates waveforms that are more globally coherent on average than prior work.
Finally, \name{} achieves higher MOS on all axes compared to WaveGAN while using more than $4\times$ fewer parameters.

\textbf{Non-autoregressive.}
Integrating \name{} into DiffWave substantially improves performance on all metrics compared to its WaveNet-based counterpart, and achieves a new overall state-of-the-art performance on all quantitative and qualitative metrics on SC09. 
We note that this result involved \emph{zero tuning} of the model or training parameters (e.g.~diffusion steps or optimizer hyperparameters) (\cref{appendix:methods}).
This suggests that \name{} could be useful not only for AR waveform modeling but also as a new drop-in architecture for many audio generation systems which currently depend on WaveNet (see~\cref{sec:related}). 

We additionally conduct several ablation studies on our hybrid DiffWave and \name{} model, and compare performance earlier in training and with smaller models (\cref{tab:sc09-diffusion}). 
When paired with DiffWave, \name{} is much more sample efficient than WaveNet,
matching the performance of the best WaveNet-based model with half as many training steps.
\citet{kong2021diffwave} also observed that DiffWave was extremely sensitive with a WaveNet backbone, performing poorly with smaller models and becoming unstable with larger ones.
We show that, when using WaveNet, a small DiffWave model fails to model the dataset, however it works much more effectively when using \name. 
Finally, we ablate our non-causal relaxation, showing that this bidirectional version of \name{} performs much better than its unidirectional counterpart (as expected).

\section{Discussion}
\label{sec:discussion}

Our results indicate that \name{} is a promising new architecture for modeling raw audio waveforms.
When trained on music and speech datasets,
\name{} generates waveforms that humans judge to be more musical and intelligible respectively compared to waveforms from previous architectures,
indicating that audio generated by \name{} has a higher degree of global coherence.
By leveraging the dual convolutional and recurrent forms of S4,
\name{} is more computationally efficient than past architectures during both training and inference.
Additionally, \name{} is consistently more sample efficient to train---it achieves better quantitative performance with fewer training steps.
Finally, when used as a drop-in replacement for WaveNet, \name{} improved the performance of an existing state-of-the-art model for unconditional generation, indicating a potential for \name{} to create a ripple effect of improving audio generation more broadly.

\section*{Acknowledgments}
We thank John Thickstun for helpful conversations. We gratefully acknowledge the support of NIH under No. U54EB020405 (Mobilize), NSF under Nos. CCF1763315 (Beyond Sparsity), CCF1563078 (Volume to Velocity), and 1937301 (RTML); ARL under No. W911NF-21-2-0251 (Interactive Human-AI Teaming); ONR under No. N000141712266 (Unifying Weak Supervision); ONR N00014-20-1-2480: Understanding and Applying Non-Euclidean Geometry in Machine Learning; N000142012275 (NEPTUNE); NXP, Xilinx, LETI-CEA, Intel, IBM, Microsoft, NEC, Toshiba, TSMC, ARM, Hitachi, BASF, Accenture, Ericsson, Qualcomm, Analog Devices, Google Cloud, Salesforce, Total, the HAI-AWS Cloud Credits for Research program, the Stanford Data Science Initiative (SDSI), and members of the Stanford DAWN project: Facebook, Google, and VMWare. The U.S. Government is authorized to reproduce and distribute reprints for Governmental purposes notwithstanding any copyright notation thereon. Any opinions, findings, and conclusions or recommendations expressed in this material are those of the authors and do not necessarily reflect the views, policies, or endorsements, either expressed or implied, of NIH, ONR, or the U.S. Government. 

\bibliography{biblio}

\newpage
\appendix
\onecolumn

\section{Model Details}
\label{sec:model-details}

\subsection{S4 Stability}

We prove \cref{prop:hippo}.
We build off the S4 representation of HiPPO matrices, using their decomposition as a normal plus low-rank matrix which implies that they are unitarily similar to a diagonal plus low-rank matrix.
Then we show that the low-rank portion of this decomposition is in fact negative semidefinite,
while the diagonal portion has non-positive real part.

\begin{proof}[Proof of \cref{prop:hippo}]%
  We consider the diagonal plus low-rank decompositions shown in \citet{gu2022efficiently} of the three original HiPPO matrices \citet{gu2020hippo},
  and show that the low-rank portions are in fact negative semidefinite.

  \textbf{HiPPO-LagT.}
  The family of generalized HiPPO-LagT matrices are defined by
  \begin{align*}
    \bm{A}_{nk} &=
    \begin{cases}%
      0            & n < k \\
      -\frac{1}{2} - \beta & n=k   \\
      -1           & n > k \\
    \end{cases}
  \end{align*}
  for \( 0 \le \beta \le \frac{1}{2} \), with the main HiPPO-LagT matrix having $\beta=0$.

  It can be decomposed as
  \begin{align*}
    \bm{A} &=
    -
    \begin{bmatrix}
      \frac{1}{2}+\beta &             &             &            & \dots \\
      1           & \frac{1}{2}+\beta &             &             \\
      1           & 1           & \frac{1}{2}+\beta &             \\
      1           & 1           & 1           & \frac{1}{2}+\beta \\
      \vdots      &             &             &              & \ddots \\
    \end{bmatrix}
    =
    - \beta I
    -
    \begin{bmatrix}
      & -\frac{1}{2} & -\frac{1}{2} & -\frac{1}{2} \\
      \frac{1}{2} &              & -\frac{1}{2} & -\frac{1}{2} & \cdots \\
      \frac{1}{2} & \frac{1}{2}  &              & -\frac{1}{2} \\
      \frac{1}{2} & \frac{1}{2}  & \frac{1}{2}  &              \\
      \vdots      &             &             &              & \ddots \\
    \end{bmatrix}
    -
    \frac{1}{2}
    \begin{bmatrix}
      1 & 1 & 1 & 1  & \cdots \\
      1 & 1 & 1 & 1 \\
      1 & 1 & 1 & 1 \\
      1 & 1 & 1 & 1 \\
      \vdots &   &   &    & \ddots \\
    \end{bmatrix}
    .
  \end{align*}
  The first term is skew-symmetric, which is unitarily similar to a (complex) diagonal matrix with pure imaginary eigenvalues (i.e., real part \( 0 \)).
  The second matrix can be factored as \( pp^* \) for \( p = 2^{-1 / 2} \begin{bmatrix} 1 & \cdots & 1 \end{bmatrix}^* \).
  Thus the whole matrix \( A \) is unitarily similar to a matrix \( \Lambda - pp^* \) where the eigenvalues of \( \Lambda \) have real part between \( -\frac{1}{2} \) and \( 0 \).

  \textbf{HiPPO-LegS.}
  The HiPPO-LegS matrix is defined as
  \begin{align*}
    \bm{A}_{nk}
    =
    -
    \begin{cases}
      (2n+1)^{1/2}(2k+1)^{1/2} & \mbox{if } n > k \\
      n+1                      & \mbox{if } n = k \\
      0                        & \mbox{if } n < k
    \end{cases}
    .
  \end{align*}
  It can be decomposed as
  Adding \( \frac{1}{2}(2n+1)^{1/2}(2k+1)^{1/2} \) to the whole matrix gives
  \begin{align*}
    & -\frac{1}{2} I
    - S
    - pp^*
    \\
    S_{nk} &=
    \begin{cases}
      \frac{1}{2} (2n+1)^{1/2}(2k+1)^{1/2}  & \mbox{if } n > k \\
      0                           & \mbox{if } n = k \\
      -\frac{1}{2} (2n+1)^{1/2}(2k+1)^{1/2} & \mbox{if } n < k \\
    \end{cases}
    \\
    p_n &= (n+\frac{1}{2})^{1 / 2}
  \end{align*}
  Note that \( S \) is skew-symmetric.
  Therefore \( A \) is unitarily similar to a matrix \( \Lambda - pp^* \) where the eigenvalues of \( \Lambda \) have real part \( -\frac{1}{2} \).

  \textbf{HiPPO-LegT.}

  Up to the diagonal scaling,
  the LegT matrix is
  \begin{align*}
    \bm{A} =
    -
    \begin{bmatrix}
      1      & -1 & 1  & -1  & \dots \\
      1      & 1  & -1 & 1  \\
      1      & 1  & 1  & -1 \\
      1      & 1  & 1  & 1  \\
      \vdots &    &    &     & \ddots
    \end{bmatrix}
    =
    -
    \begin{bmatrix}
      0 & -1 & 0 & -1 & \cdots \\
      1 & 0 & -1 & 0 \\
      0 & 1 & 0 & -1 \\
      1 & 0 & 1 & 0 \\
      \vdots &    &    &     & \ddots
    \end{bmatrix}
    -
    \begin{bmatrix}
      1 & 0 & 1 & 0 & \cdots \\
      0 & 1 & 0 & 1 \\
      1 & 0 & 1 & 0 \\
      0 & 1 & 0 & 1 \\
      \vdots &    &    &     & \ddots
    \end{bmatrix}
    .
  \end{align*}

  The first term is skew-symmetric and the second term can be written as \( pp^* \) for
  \begin{align*}
    p =
    \begin{bmatrix}
      1 & 0 & 1 & 0 & \cdots \\
      0 & 1 & 0 & 1 & \cdots \\
    \end{bmatrix}^\top
  \end{align*}
\end{proof}

\subsection{Model Architecture}

\paragraph{S4 Block Details}

The first portion of the S4 block is the same as the one used in \citet{gu2022efficiently}.
\begin{align*}
  y &= x \\
  y &= \mathsf{LayerNorm}(y) \\
  y &= \mathsf{S4}(y) \\
  y &= \phi(y) \\
  y &= Wy + b \\
  y &= x + y \\
\end{align*}
Here \( \phi \) is a non-linear activation function, chosen to be GELU \citep{hendrycks2016gaussian} in our implementation.
Note that all operations aside from the S4 layer are \emph{position-wise} (with respect to the time or sequence dimension).

These operations are followed by more position-wise operations, which are standard in other deep neural networks such as Transformers (where it is called the feed-forward network) and CNNs (where it is called the inverted bottleneck layer).
\begin{align*}
  y &= x \\
  y &= \mathsf{LayerNorm}(y) \\
  y &= W_1y + b_1 \\
  y &= \phi(y) \\
  y &= W_2y + b_2 \\
  y &= x + y \\
\end{align*}
Here \( W_1 \in \mathbbm{R}^{d \times ed} \) and \( W_2 \in \mathbbm{R}^{ed \times d} \), where \( e \) is an expansion factor.
We fix \( e=2 \) in all our experiments.

\section{Additional Results}
\label{sec:experiments-additional}
We provide details of ablations, including architecture ablations and efficiency benchmarking.
\subsubsection{YouTubeMix}
 We conduct architectural ablations and efficiency benchmarking for all baselines on the YouTubeMix dataset.
 
{\bf Architectures.} SampleRNN-$2$ and SampleRNN-$3$ correspond to the $2$- and $3$-tier models described in \cref{appendix:methods} respectively. WaveNet-$512$ and WaveNet-$1024$ refer to models with $512$ and $1024$ skip channels respectively with all other details fixed as described in \cref{appendix:methods}. \name-$\{2, 4, 6, 8\}$ consist of the indicated number of S4 blocks in each tier of the architecture, with all other details being the same.

{\bf Isotropic S4.} We also include an isotropic S4 model to ablate the effect of pooling in \name{}. Isotropic S4 can be viewed as \name{} without any pooling (i.e. no additional tiers aside from the top tier). We note that due to larger memory usage for these models, we use a sequence length of $4$s for the $4$ layer isotropic model, and a sequence length of $2$s for the $8$ layer isotropic model (both with batch size $1$), highlighting an additional disadvantage in memory efficiency.

{\bf Throughput Benchmarking.} To measure peak throughput, we track the time taken by models to generate $1000$ samples at batch sizes that vary from $1$ to $8192$ in powers of $2$. The throughput is the total number of samples generated by a model in $1$ second. Figure~\ref{fig:throughput-all} shows the results of this study in more detail for each method.

\begin{figure*}[t]
    \centering
    \includegraphics[width=\linewidth]{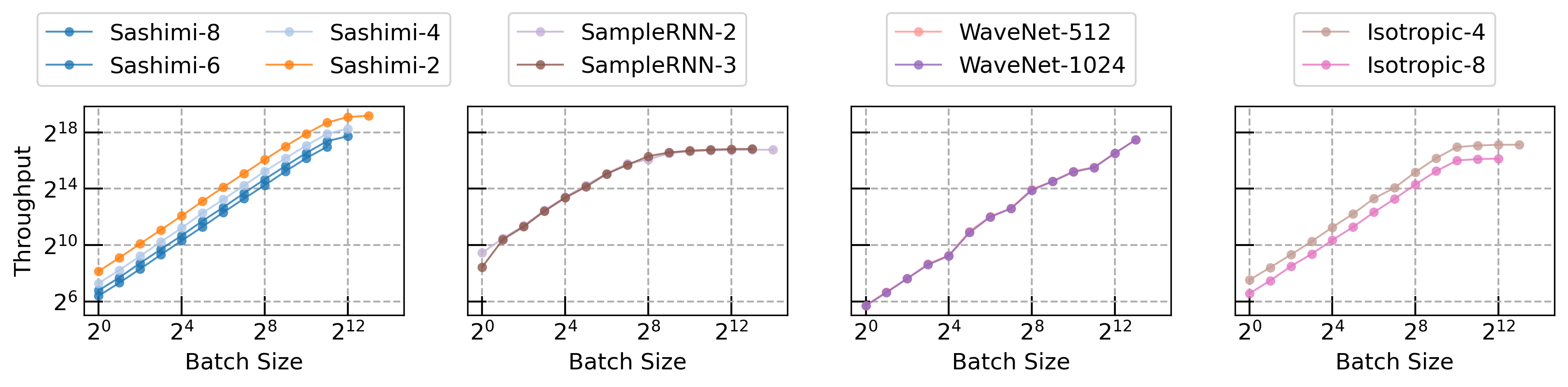}
    \caption{Log-log plot of throughput vs. batch size. Throughput scales near linearly for \name. By contrast, SampleRNN throughput peaks at smaller batch sizes, while WaveNet shows sublinear scaling with throughput degradation at some batch sizes. Isotropic variants have far lower throughput than \name.}
    \label{fig:throughput-all}
\end{figure*}

\begin{figure}[t]
    \centering
    \includegraphics[width=0.5\linewidth]{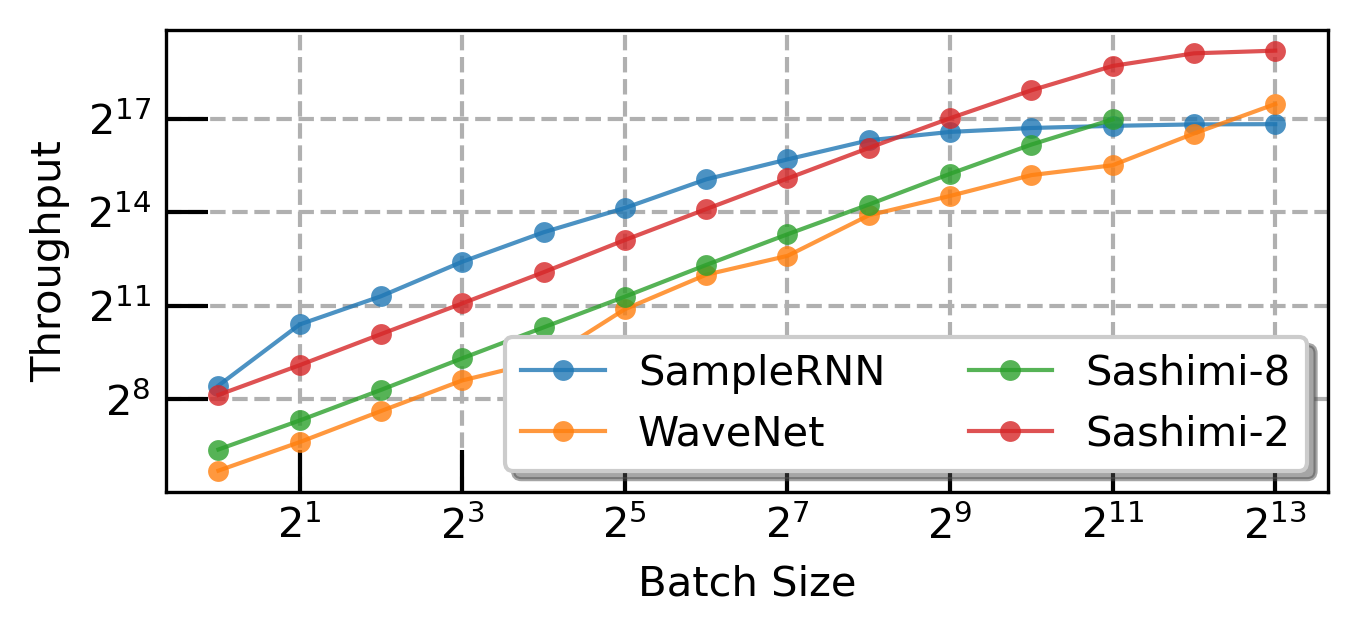}
    \caption{Log-log plot of throughput vs. batch size. \name{}-$2$ improves peak throughput over WaveNet and SampleRNN by $3\times$ and $5\times$ respectively.}
    \label{fig:throughput}
\end{figure}

{\bf Diffusion model ablations.}
\cref{tab:sc09-diffusion} reports results for the ablations described in \cref{sec:diffwave}.
Experimental details are provided in \cref{appendix:methods}.

\iftoggle{arxiv}
{}
{
    \begin{table*}[t]
        \centering
        \caption{
          (\textbf{SC09 diffusion models}.)
          Beyond AR, \name{} can be flexibly combined with other generative modeling approaches, improving on the state-of-the-art DiffWave model by simply replacing the architecture.
          (\emph{Top}) A parameter-matched \name{} architecture with \emph{no tuning} outperforms the best DiffWave model.
          (\emph{Middle}) \name{} is consistently better than WaveNet at all stages of training; a model trained on half the samples matches the best DiffWave model.
          (\emph{Bottom}) The WaveNet backbone is extremely sensitive to architecture parameters such as size and dilation schedule; a small model fails to learn. We also ablate the bidirectional S4 layer, which outperforms the unidirectional one.
        }
        \sc
        \resizebox{\linewidth}{!}{
            \begin{tabular}{llllllllllll}
                \toprule
                            \multirow{2}{*}{Architecture} & \multirow{2}{*}{Params} & \multirow{2}{*}{Training Steps} & \multirow{2}{*}{FID $\downarrow$} & \multirow{2}{*}{IS $\uparrow$} & \multirow{2}{*}{mIS $\uparrow$} & \multirow{2}{*}{AM $\downarrow$} & \multirow{2}{*}{NDB $\downarrow$} & \multirow{2}{*}{\shortstack{Human (\( \kappa \))\\Agreement}} & \multicolumn{3}{c}{MOS} \\
                \cmidrule{10-12}
                &                         &                                   &                                &                                 &                                  &                                   & & & Quality                & Intelligibility     & Diversity          \\
    
                \midrule
                \name{}        & 23.0M      & 800k           & $1.42$              & $5.94$           & $69.17$            & $0.59$             & $0.88$              &   $0.953$                             &  $4.20 \pm 0.06$               & $4.33 \pm 0.03$ & $3.28 \pm 0.11$                      \\
                WaveNet        & 24.1M      & 1000k          & $1.92$           & $5.26$        & $51.21$        & $0.68$          & $0.88$           & $0.917$                        & $4.03 \pm 0.06$ & $4.15 \pm 0.03$        & $3.45 \pm 0.09$ \\
                \midrule
                \name{}        & 23.0M      & 500k           & $2.08$           & $5.68$        & $51.10$        & $0.66$          & $0.76$           & $0.923$                        & $3.99 \pm 0.06$ & $4.13 \pm 0.03$        & $3.38 \pm 0.10$ \\
                WaveNet        & 24.1M      & 500k           & $2.25$           & $4.68$        & $34.55$        & $0.80$           & $0.90$            & $0.848$                        & $3.53 \pm 0.07$ & $3.69 \pm 0.03$        & $3.30 \pm 0.08$ \\
                \midrule
                \name{} (uni.) & 7.1M       & 500k           & $2.70$              & $3.62$           & $17.96$            & $1.03$             & $0.90$              &  $0.829$                              & $3.08 \pm 0.07$                 & $3.29 \pm 0.04$ & $3.26 \pm 0.08$                       \\
                \name{}        & 7.5M       & 500k           & $1.70$              & $5.00$           & $40.27$        & $0.72$             & $0.90$              & $0.934$                               &  $3.83 \pm 0.07$               &     $4.00 \pm 0.03$ & $3.34 \pm 0.09$                   \\
                WaveNet        & 6.8M       & 500k           & $4.53$              & $2.80$           & $9.02$        & $1.30$             & $0.94$              &   $0.446$                             &  $1.85 \pm 0.08$               & $1.90 \pm 0.03$ & $3.03 \pm 0.10$                      \\
                \bottomrule
            \end{tabular}
        }
        \label{tab:sc09-diffusion}
    \end{table*}
}

\section{Experiment Details}
We include experimental details, including dataset preparation, hyperparameters for all methods, details of ablations as well as descriptions of automated and human evaluation metrics below.
\subsection{Datasets}

A summary of dataset information can be found in Table~\ref{tab:datasets}. Across all datasets, audio waveforms are preprocessed to $16$kHz using \texttt{torchaudio}.

{\bf Beethoven.} The dataset consists of recordings of Beethoven's $32$ piano sonatas. We use the version of the dataset shared by \citet{mehri2017samplernn}, which can be found \href{https://drive.google.com/drive/folders/0B7riq_C8aslvbWJuMGhJRFBmSHM?resourcekey=0-fM79ZaHDzE4IPUMzDUK6uA}{here}. Since we compare to numbers reported by \citet{mehri2017samplernn}, we use linear quantization for all (and only) Beethoven experiments. We attempt to match the splits used by the original paper by reference to the code provided \href{https://github.com/soroushmehr/sampleRNN_ICLR2017}{here}.

{\bf YouTubeMix.} A $4$ hour dataset of piano music taken from \url{https://www.youtube.com/watch?v=EhO_MrRfftU}. We split the audio track into \texttt{.wav} files of $1$ minute each, and use the first $88\%$ files for training, next $6\%$ files for validation and final $6\%$ files for testing.

{\bf SC09.}
The Speech Commands dataset~\cite{Warden2018SpeechCA} contains many spoken words by thousands of speakers under various recording conditions including some very noisy environments.
Following prior work~\citep{donahue2019adversarial,kong2021diffwave} we use the subset that contains spoken digits ``zero'' through ``nine''.
This SC09 dataset contains 31,158 training utterances (8.7 hours in total) by 2,032 speakers, where each audio
has length 1 second sampled at 16kHz.
the generative models need to model them without any conditional information.

The datasets we used can be found on Huggingface datasets: \href{https://huggingface.co/datasets/krandiash/beethoven}{Beethoven}, \href{https://huggingface.co/datasets/krandiash/youtubemix}{YouTubeMix}, \href{https://huggingface.co/datasets/krandiash/sc09}{SC09}.

\subsection{Models and Training Details}
\label{appendix:methods}

For all datasets, \name{}, SampleRNN and WaveNet receive $8$-bit quantized inputs. During training, we use no additional data augmentation of any kind. We summarize the hyperparameters used and any sweeps performed for each method below. 

\subsubsection{Details for Autoregressive Models}
\label{appendix:methods-autoregressive}
All methods in the AR setting were trained on single V100 GPU machines.

\paragraph{\name.}
We adapt the S4 implementation provided by \citet{gu2022efficiently} to incorporate parameter tying for $pq^*$. For simplicity, we do not train the low-rank term $pp^*$, timescale $dt$ and the $B$ matrix throughout our experiments, and let $\diag$ be trained freely. We find that this is actually stable, but leads to a small degradation in performance compared to the original S4 parameterization. Rerunning all experiments with our updated Hurwitz parameterization--which constrains the real part of the entries of $\diag$ using an $\exp$ function--would be expensive, but would improve performance. For all datasets, we use feature expansion of $2\times$ when pooling, and use a feedforward dimension of $2\times$ the model dimension in all inverted bottlenecks in the model. We use a model dimension of $64$. For S4 parameters, we only train $\Lambda$ and $C$ with the recommended learning rate of $0.001$, and freeze all other parameters for simplicity (including $pp^*, B, dt$). We train with $4\times \rightarrow 4\times$ pooling for all datasets, with $8$ S4 blocks per tier. 

On Beethoven, we learn separate $\Lambda$ matrices for each SSM in the S4 block, while we use parameter tying for $\Lambda$ within an S4 block on the other datasets. On SC09, we found that swapping in a gated linear unit (GLU)~\citep{dauphin2017language} in the S4 block improved NLL as well as sample quality.

We train \name{} on Beethoven for $1$M steps, YouTubeMix for $600$K steps, SC09 for $1.1$M steps.

\paragraph{SampleRNN.}
We adapt an \href{https://github.com/deepsound-project/samplernn-pytorch}{open-source PyTorch implementation} of the SampleRNN backbone, and train it using truncated backpropagation through time (TBPTT) with a chunk size of $1024$. We train $2$ variants of SampleRNN: a 3-tier model with frame sizes $8, 2, 2$ with $1$ RNN per layer to match the 3-tier RNN from~\citet{mehri2017samplernn} and a 2-tier model with frame sizes $16, 4$ with $2$ RNNs per layer that we found had stronger performance in our replication (than the 2-tier model from~\citet{mehri2017samplernn}). For the recurrent layer, we use a standard GRU model with orthogonal weight initialization following~\citet{mehri2017samplernn}, with hidden dimension $1024$ and feedforward dimension $256$ between tiers. We also use weight normalization as recommended by~\citet{mehri2017samplernn}.

We train SampleRNN on Beethoven for $150$K steps, YouTubeMix for $200$K steps, SC09 for $300$K steps. We found that SampleRNN could be quite unstable, improving steadily and then suddenly diverging. It also appeared to be better suited to training with linear rather than mu-law quantization.

\paragraph{WaveNet.}
We adapt an \href{https://github.com/vincentherrmann/pytorch-wavenet}{open-source PyTorch implementation} of the WaveNet backbone, trained using standard backpropagation. We set the number of residual channels to $64$, dilation channels to $64$, end channels to $512$. We use $4$ blocks of dilation with $10$ layers each, with a kernel size of $2$. Across all datasets, we sweep the number of skip channels among $\{512, 1024\}$. For optimization, we use the AdamW optimizer, with a learning rate of $0.001$ and a plateau learning rate scheduler that has a patience of $5$ on the validation NLL. During training, we use a batch size of $1$ and pad each batch on the left with zeros equal to the size of the receptive field of the WaveNet model ($4093$ in our case).

We train WaveNet on Beethoven for $400$K steps, YouTubeMix for $200$K steps, SC09 for $500$K steps.

\subsubsection{Details for Diffusion Models}
\label{appendix:methods-diffusion}
All diffusion models were trained on 8-GPU A100 machines.

\paragraph{DiffWave.}
We adapt an \href{https://github.com/philsyn/DiffWave-unconditional}{open-source PyTorch implementation} of the DiffWave model.
The DiffWave baseline in \cref{tab:sc09} is the unconditional SC09 model reported in \citet{kong2021diffwave},
which uses a 36 layer WaveNet backbone with dilation cycle \( [1, 2, 4, 8, 16, 32, 64, 128, 256, 512, 1024, 2048] \) and hidden dimension \( 256 \),
a linear diffusion schedule \( \beta_t \in [1 \times 10^4, 0.02] \) with \( T=200 \) steps,
and the Adam optimizer with learning rate \( 2\times 10^{-4} \).
The small DiffWave model reported in \cref{tab:sc09-diffusion} has 30 layers with dilation cycle \( [1, 2, 4, 8, 16, 32, 64, 128, 256, 512] \) and hidden dimension \( 128 \).

\paragraph{DiffWave with \name.}
Our large \name{} model has hidden dimension \( 128 \) and \( 6 \) S4 blocks per tier with the standard two pooling layers with pooling factor \( 4 \) and expansion factor \( 2 \) (\cref{sec:architecture}).
We additionally have S4 layers in the down-blocks in addition to the up-blocks of \cref{fig:architecture}.
Our small \name{} model (\cref{tab:sc09-diffusion}) reduces the hidden dimension to \( 64 \).
These architectures were chosen to roughly parameter match the DiffWave model.
While DiffWave experimented with other architectures such as deep and thin WaveNets or different dilation cycles~\citep{kong2021diffwave},
we only ran a single \name{} model of each size.
All optimization and diffusion hyperparameters were kept the same,
with the exception that we manually decayed the learning rate of the large \name{} model at $500$K steps as it had saturated and the model had already caught up to the best DiffWave model (\cref{tab:sc09-diffusion}).

\subsection{Automated Evaluations}
\label{sec:evaluations}

{\bf NLL.} We report negative log-likelihood (NLL) scores for all AR models in bits, on the test set of the respective datasets. To evaluate NLL, we follow the same protocol as we would for training, splitting the data into non-overlapping chunks (with the same length as training), running each chunk through a model and then using the predictions made on each step of that chunk to calculate the average NLL for the chunk.

{\bf Evaluation of generated samples.}
Following \citet{kong2021diffwave}, we use 4 standard evaluation metrics for generative models evaluated using an auxiliary ResNeXT classifier \citep{xie2017aggregated} which achieved 98.3\% accuracy on the test set. Note that \citet{kong2021diffwave} reported an additional metric NDB (number of statistically-distinct bins), which we found to be slow to compute and generally uninformative, despite \name{} performing best.
\begin{itemize}%
  \item \textbf{Fr\'echet Inception Distance (FID)} \citep{heusel2017gans} uses the classifier to compare moments of generated and real samples in feature space.
  \item \textbf{Inception Score (IS)} \citep{salimans2016improved} measures both quality and diversity of generated samples, and favoring samples that the classifier is confident on.
  \item \textbf{Modified Inception Score (mIS)} \citep{gurumurthy2017deligan}
  provides a measure of both intra-class in addition to inter-class diversity.
  \item \textbf{AM Score} \citep{zhou2018activation} uses the marginal label distribution of training data compared to IS.
\end{itemize}

We also report the Cohen's inter-annotator agreement $\kappa$ score, which is computed with the classifier as one rater and a crowdworker's digit prediction as the other rater (treating the set of crowdworkers as a single rater).

\subsubsection{Evaluation Procedure for Autoregressive Models}

Because autoregressive models have tractable likelihood scores that are easily evaluated,
we use them to perform a form of rejection sampling when evaluating their automated metrics.
Each model generated \( 5120 \) samples and ranked them by likelihood scores.
The lowest-scoring $0.40$ and highest-scoring $0.05$ fraction of samples were thrown out.
The remaining samples were used to calculate the automated metrics.

The two thresholds for the low- and high- cutoffs were found by validation on a separate set of $5120$ generated samples.

\subsubsection{Evaluation Procedure for Non-autoregressive Models}

Automated metrics were calculated on 2048 random samples generated from each model.

\subsection{Evaluation of Mean Opinion Scores}
\label{sec:mos}

For evaluating mean opinion scores (MOS), we repurpose scripts for creating jobs for Amazon Mechanical Turk from \citet{Neekhara2019ExpeditingTS}.

\begin{figure}
    \centering
    \caption{{\bf (YouTubeMix MOS Interface)} Crowdsourcing interface for collecting mean opinion scores (MOS) on YouTubeMix. Crowdworkers are given a collection of audio files, one from each method and the dataset. They are asked to rate each file on audio fidelity and musicality.}
    \includegraphics[width=\linewidth]{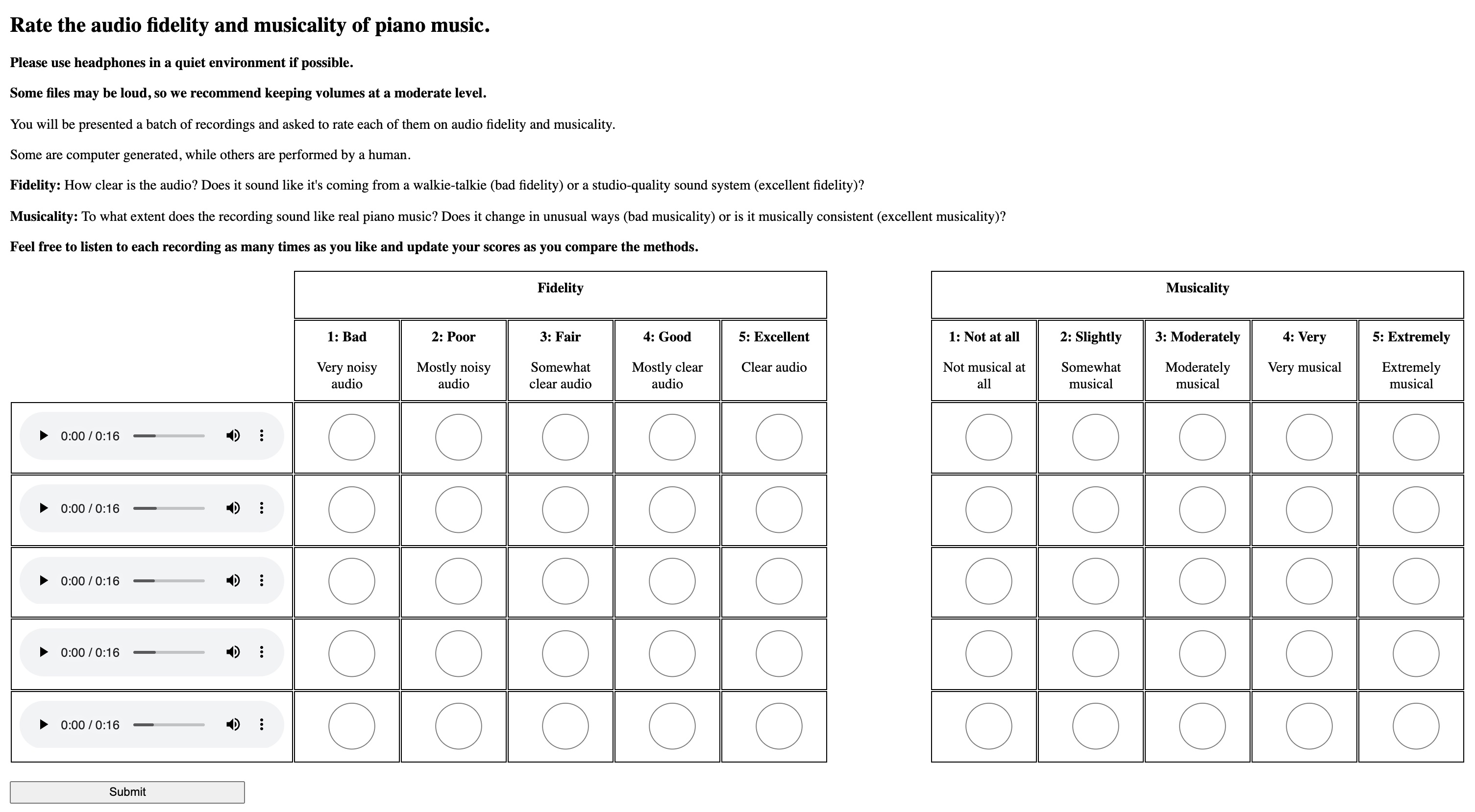}
    \label{fig:mturk-interface-music}
\end{figure}

\begin{figure}
    \centering
    \caption{{\bf (SC09 MOS Interface)} Crowdsourcing interface for collecting mean opinion scores (MOS) on SC09. Crowdworkers are given a collection of $10$ audio files from the same method, and are asked to classify the spoken digits and rate them on intelligibility. At the bottom, they provide a single score on the audio quality and speaker diversity they perceive for the batch.}
    \begin{subfigure}{\linewidth}
        \includegraphics[width=\linewidth]{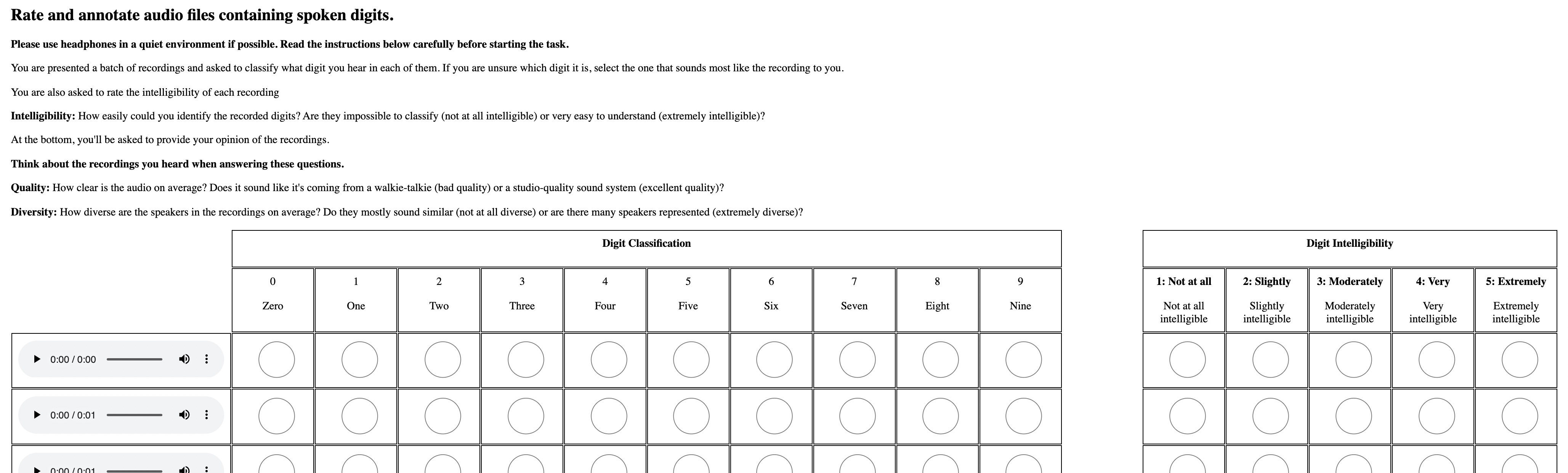}
    \end{subfigure}
    \begin{subfigure}{\linewidth}
        \includegraphics[width=\linewidth]{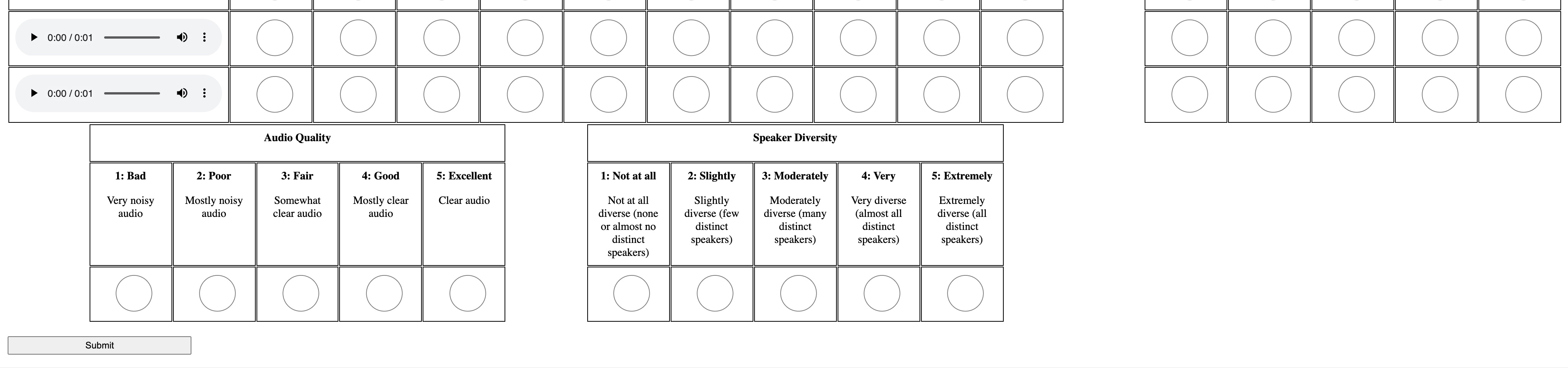}
    \end{subfigure}
    \label{fig:mturk-interface-digits}
\end{figure}

\subsubsection{Mean Opinion Scores for YouTubeMix}
We collect MOS scores on audio fidelity and musicality, following \citet{dieleman2018challenge}. The instructions and interface used are shown in Figure~\ref{fig:mturk-interface-music}.

The protocol we follow to collect MOS scores for YouTubeMix is outlined below. For this study, we compare unconditional AR models, \name{} to SampleRNN and WaveNet.
\begin{itemize}
    \item For each method, we generated unconditional $1024$ samples, where each sample had length $16$s ($1.024$M steps). For sampling, we directly sample from the distribution output by the model at each time step, without using any other modifications.
    \item As noted by \citet{mehri2017samplernn}, autoregressive models can sometimes generate samples that are ``noise-like". To fairly compare all methods, we sequentially inspect the samples and reject any that are noise-like. We also remove samples that mostly consist of silences (defined as more than half the clip being silence). We carry out this process until we have $30$ samples per method.
    \item Next, we randomly sample $25$ clips from the dataset. Since this evaluation is quite subjective, we include some gold standard samples. We add $4$ clips that consist mostly of noise (and should have musicality and quality MOS $<=2$). We include $1$ clip that has variable quality but musicality MOS $<=2$. Any workers who disagree with this assessment have their responses omitted from the final evaluation.
    \item We construct $30$ batches, where each batch consists of $1$ sample per method (plus a single sample for the dataset), presented in random order to a crowdworker. We use Amazon Mechanical Turk for collecting responses, paying $\$0.50$ per batch and collecting $20$ responses per batch. We use Master qualifications for workers, and restrict to workers with a HIT approval rating above $98\%$. We note that it is likely enough to collect $10$ responses per batch.
\end{itemize}

\subsubsection{Mean Opinion Scores for SC09}
Next, we outline the protocol used for collecting MOS scores on SC09. We collect MOS scores on digit intelligibility, audio quality and speaker diversity, as well as asking crowdworkers to classify digits following \citet{donahue2019adversarial}. The instructions and interface used are shown in Figure~\ref{fig:mturk-interface-digits}.
\begin{itemize}
    \item For each method, we generate $2048$ samples of $1$s each. For autoregressive models (\name{}, SampleRNN, WaveNet), we directly sample from the distribution output by the model at each time step, without any modification. For WaveGAN, we obtained $50000$ randomly generated samples from the authors, and subsampled $2048$ samples randomly from this set. For the diffusion models, we run $200$ steps of denoising following \citet{kong2021diffwave}.
    \item We use the ResNeXT model (\cref{sec:evaluations}) to classify the generated samples into digit categories. Within each digit category, we choose the top-$50$ samples, as ranked by classifier confidence. We note that this mimics the protocol followed by \citet{donahue2019adversarial}, which we established through correspondence with the authors.
    \item Next, we construct batches consisting of $10$ random samples (randomized over all digits) drawn from a single method (or the dataset). Each method (and the dataset) thus has $50$ total batches. We use Amazon Mechanical Turk for collecting responses, paying $\$0.20$ per batch and collecting $10$ responses per batch. We use Master qualification for workers, and restrict to workers with a HIT approval rating above $98\%$.
\end{itemize}
Note that we elicit digit classes and digit intelligibility scores for each audio file, while audio quality and speaker diversity are elicited once per batch.

\end{document}